\documentclass[english,hyphens]{article}
\usepackage[T1]{fontenc}
\usepackage[utf8]{inputenc}
\usepackage{babel}
\usepackage{refstyle}
\usepackage{float}
\usepackage{calc}
\usepackage{textcomp}
\usepackage{amsthm}
\usepackage{amsmath}
\usepackage{amssymb}
\usepackage{graphicx}
\usepackage[unicode=true]
 {hyperref}

\makeatletter

\AtBeginDocument{\providecommand\secref[1]{\ref{sec:#1}}}
\providecommand{\tabularnewline}{\\}
\floatstyle{ruled}
\newfloat{algorithm}{tbp}{loa}
\providecommand{\algorithmname}{Algorithm}
\floatname{algorithm}{\protect\algorithmname}
\RS@ifundefined{subref}
  {\def\RSsubtxt{section~}\newref{sub}{name = \RSsubtxt}}
  {}
\RS@ifundefined{thmref}
  {\def\RSthmtxt{theorem~}\newref{thm}{name = \RSthmtxt}}
  {}
\RS@ifundefined{lemref}
  {\def\RSlemtxt{lemma~}\newref{lem}{name = \RSlemtxt}}
  {}

\numberwithin{equation}{section}
\numberwithin{figure}{section}
\theoremstyle{plain}
\newtheorem{thm}{\protect\theoremname}

\usepackage{hyperref}
\usepackage[T1]{fontenc}
\usepackage{microtype}

\makeatother

\providecommand{\theoremname}{Theorem}

\begin{document}

\title{\textbf{Raziel}:\\
Private and Verifiable Smart Contracts on Blockchains}

\author{David Cerezo Sánchez%
\thanks{Corresponding author%
}\\
{\small{}Calctopia\textsuperscript{{*}}}\\
{\small{}david@calctopia.com\textsuperscript{{*}}}\\
{\small{}}\\
{\small{}}\\
{\small{}(This paper is a first implementation of \href{https://www.calctopia.com/papers/csfi.pdf}{PCT/IB2015/055776})}\\
{\small{}}\\
}
\maketitle
\begin{abstract}
Raziel combines secure multi-party computation and proof-carrying
code to provide privacy, correctness and verifiability guarantees
for smart contracts on blockchains. Effectively solving DAO and Gyges
attacks, this paper describes an implementation and presents examples
to demonstrate its practical viability (e.g., private and verifiable
crowdfundings and investment funds, double auctions for decentralized
exchanges). Additionally, we show how to use Zero-Knowledge Proofs
of Proofs (i.e., Proof-Carrying Code certificates) to prove the validity
of smart contracts to third parties before their execution without
revealing anything else. Finally, we show how miners could get rewarded
for generating pre-processing data for secure multi-party computation.
\pagebreak{}
\end{abstract}

\section{Introduction}

The growing demand for blockchain and smart contract\cite{ideaSmartContracts,1703.06322,1608.00771,Jones_howto,Hvitved10asurvey,cryptoeprint:2016:1156}
technologies sets the challenge of protecting them from intellectual
property theft and other attacks\cite{cryptoeprint:2016:1007}: security,
confidentiality and privacy are the key issues holding back their
adoption\cite{americannBankerPrivacy}. As shown in this paper, the
solution must be inter-disciplinary (i.e., cryptography and formal
verification techniques): code from smart contracts cannot be formally-verified
before its execution with cryptographic techniques and conversely,
formal verification techniques keep track of information flows to
guarantee confidentiality but they don't provide privacy on their
own\cite{Fournet:2009:SCD:1653662.1653715,Fournet:2011:ITH:2046707.2046747}.

The availability of the proposed technical solution and its broad
adoption is ultimately a common good: it's in the public interest
to use smart contracts that respect the confidentiality and privacy
of the processed data and that can be efficiently verified. Smart
contracts have recently been heralded as ``cryptocurrency's killer
app''\cite{smartContractKillerApp}, but to develop the next digital
businesses based on the blockchain (e.g., the equivalents of Paypal,
Visa, Western Union, NYSE) there is an urgent need for better protected
smart contracts, a gap solved on this publication.

\paragraph{Contributions}

We propose a system for securely computing smart contracts guaranteeing
their privacy, correctness and verifiability. Our main and novel contributions
are:
\begin{itemize}
\item Practical formal verification of smart contracts: the proofs accompanying
the smart contracts can be used to prove functional correctness of
a computation as well as other properties including termination, security,
pre-conditions and post-conditions, invariants and any other requirements
for well-behaved code. A smart contract that has been fully formally
verified protects against the subtle bugs such as the one that enabled
millions to be stolen from the DAO\cite{daoanalysis}. And when signed
proofs conform to the specifications of a trusted organization, Raziel
also prevents the execution of criminal smart contracts (i.e., Gyges
attacks\cite{EPRINT:JueKosShi16}) solving this kind of attacks for
the first time.
\item A practical use of Non-Interactive Zero-Knowledge proofs of proofs/certifiable
certificates is proposed (see \secref{ZK-Proofs}): the code producer
can convince the executing party of the smart contract of the existence
and validity of proofs about the code without revealing any actual
information about the proofs themselves or the code of the smart contract.
Although the concept of zero-knowledge proofs of proofs is not new\cite{Blum87howto,BGG90,Pope04provinga},
it can now be realised in practice for general properties of code.
\item An outsourcing protocol for secure computation that allows offline
parties and private parameter reuse is presented (see \secref{outsourcing}):
previous approaches don't allow all the parties to be offline\cite{reuseloseit,182943,EPRINT:CarLevTra14,EPRINT:CarTra16,cryptoeprint:2014:596}
or reusing encrypted values\cite{EPRINT:KamMohRay11,EPRINT:KamMohRiv12}.
\item We propose that miners should get rewarded for generating pre-processing
data for secure multi-party computation (see \secref{miningPreprocess}),
in line with earlier attempts to replace wasteful PoWs\cite{primecoin,permacoin,cryptoeprint:2017:203,pownp}.
\end{itemize}

\paragraph{Minimal set of functionalities}

We argue that the combined features here described are indeed the
minimal set of functionalities that must be offered by a secure solution
to protect computations on blockchains:
\begin{itemize}
\item A pervasive goal of blockchain technologies is the removal of trusted
third parties: towards this end, the preferred solution in cryptography
is secure multi-party computation, which covers much more than transactions.
Moreover, public permissionless blockchains are executing smart contracts
out in the open and any curious party is able to inspect the input
parameters and detailed execution of any on-chain smart contract:
again, MPC is the best cryptographic solution to protect the privacy
of the computation of said smart contracts.

\begin{itemize}
\item Since carrying out encrypted computations has a high cost, we introduce
two improvements to reduce that cost: outsourcing for cloud-based
blockchains (see \secref{outsourcing}) and mining pre-processing
data for secure multi-party computation (see \secref{miningPreprocess}).
\end{itemize}
\item Once encrypted smart contracts are being considered, a moral hazard
arises: why should anyone execute a potentially-malicious encrypted
smart contract from an anonymous party? Formal verification techniques
such as proof-carrying code provide an answer to this dilemma: mathematical
proofs about many desirable properties (termination, correctness,
security, resource consumption, legal/regulatory, economic and functional,
among others) can be offered before carrying out any encrypted execution.

\begin{itemize}
\item To prevent that said proofs leak too much information about the smart
contracts, a novel solution for zero-knowledge proofs of proofs is
proposed (see \secref{ZK-Proofs}).
\end{itemize}
\end{itemize}
This paper describes an implementation discussing cryptographic and
technical trade-offs.

\section{Background}

This section provides a brief introduction to the main technologies
that underpin Raziel: blockchains, secure multi-party computation\cite{shafiMPC}
and formal verification.

\paragraph{Blockchains}

A blockchain is a distributed ledger that stores a growing list of
unmodifiable records called blocks that are linked to previous blocks.
Blockchains can be used to make online secure transactions, authenticated
by the collaboration of the P2P nodes allowing participants to verify
and audit transactions. Blockchains can be classified according to
their openness. Open, permissionless networks don't have access controls
(e.g., Bitcoin\cite{bitcoin} and Ethereum\cite{ethereum,yellowEthereum})
and reach decentralized consensus through costly Proof-of-Work calculations
over the most recently appended data by miners. Permissioned blockchains
have identity systems to limit participation (e.g., Hyperledger Fabric\cite{cachin2016architecture})
and do not depend on proofs-of-work. Blockchain-based smart contracts
are computer programs executed by the nodes and implementing self-enforced
contracts. They are usually executed by all or many nodes (\textit{on-chain
smart contracts}), thus their code must be designed to minimize execution
costs. Lately, off-chain smart contracts frameworks are being developed
that allow the execution of more complex computational processes.

\paragraph{Secure Multi-Party Computation}

Protocols for secure multi-party computation (MPC) enable multiple
parties to jointly compute a function over inputs without disclosing
said inputs (i.e., secure distributed computation). MPC protocols
usually aim to at least satisfy the conditions of inputs privacy (i.e.,
the only information that can be inferred about private inputs is
whatever can be inferred from the output of the function alone) and
correctness (adversarial parties should not be able to force honest
parties to output an incorrect result). Multiple security models are
available: semi-honest, where corrupted parties are passive adversaries
that do not deviate from the protocol; covert, where adversaries may
deviate arbitrarily from the protocol specification in an attempt
to cheat, but do not wish to be ``caught'' doing so ; and malicious
security, where corrupted parties may arbitrarily deviate from the
protocol. Multiple related cryptographic techniques (including secret
sharing\cite{shamirSecretSharing}, oblivious transfer\cite{EPRINT:Rabin05},
garbled circuits, oblivious random access machines\cite{Goldreich:1987:TTS:28395.28416})
and MPC protocols (e.g., Yao\cite{yao82,yao86,cryptoeprint:2004:175},
GMW\cite{gmw1987}, BMR\cite{bmr}, BGW\cite{BGW88} and others) have
been developed since MPC was originally envisioned by Andrew Yao in
the early 1980s.

\paragraph{Formal verification}

Formal verification uses formal methods of mathematics on software
to prove or disprove its correctness with respect to certain formal
specifications. Deductive verification is the preferred approach:
smart contracts are annotated to generate proof obligations that are
proved using theorem provers or satisfiability modulo theories solvers.
In the software industry, formal verification is hardly used, but
in hardware the high costs of recalling defective products explain
their greater use: analogously, it's expected a higher use of formal
verification methods applied to smart contracts compared to the general
rate of use on the software industry given the losses in case of errors
and exploitation\cite{daoanalysis}.

\section{Model and Goals}

Parties executing smart contracts need to protect their private financial
information and obtain formal guarantees regarding their execution.
The central goal for Raziel is to offer a programming framework to
facilitate the development of formally-verified, privacy-preserving
smart contracts with secure computation.

\subsection{Threat Model and Assumptions}

A conservative threat model assumes that parties wish to execute smart
contracts but mutually distrust each another. Each party is potentially
malicious and the smart contract is developed by one of the parties
or externally developed. We assume that each party trusts its own
environment and the blockchain; the rest of the system is untrusted.
The threat model does not include side channel attacks or denial of
service attacks.

\subsection{Goals}

A secure smart contract system should operate as follows: a restricted
set of parties willing to execute a smart contract check the accompanying
proofs/certificates to verify it before its execution. Then these
parties send their private inputs to the nodes at the start of the
execution of the smart contract; after that, the nodes run the smart
contract carrying out the secure computation. Alternatively, the same
parties may run the secure computation between themselves without
the use of external executing nodes. Finally, the restricted set of
parties obtain the results from the secure computation, which could
be stored on the blockchain according to the rules of some consensus
protocol.

\section{Private Smart Contracts\label{sec:Private-Smart-Contracts}}

Enabling smart contracts with secure computation techniques (e.g.,
secure multi-party computation, homomorphic encryption\cite{Rivest1978},
indistinguishability obfuscation\cite{EPRINT:BitVai15}) is a key-step
to the global adoption of blockchain technologies: encrypted transactions
could be stored on the blockchain; secure computations could be carried
out between distrustful parties; even the contract's code on the blockchain
could be kept private. 

Although it would be very profitable to sell/acquire smart contracts
based on the value of their secret algorithms (see \nameref{sub:Markets-for-Smart-Contracts})
using homomorphic encryption (bootstrapping being a costly operation
that can only be minimized\cite{minBootstrappings,EPRINT:BLMZ16}
and not avoided) and indistinguishability obfuscation (polynomial-time
computable, but with constant factors $\geq2^{100}$), are both considered
currently infeasible due to concrete efficiency issues. In spite of
impressive progress towards making these techniques practical\cite{EPRINT:LMACFW16,cryptoeprint:2017:826,cryptoeprint:2017:430},
they appear to be a long way from suitability for our purposes (i.e.,
\textit{\small{}Obfustopia} is still a very expensive place, \textit{\small{}Cryptomania}
is much more affordable).

Another possible approach, which is being adopted by some blockchains
and related technologies, is to rely on a trusted execution environment
(most notably Intel's SGX\cite{cryptoeprint:2016:086}). Relying on
trusted hardware is a risky bet and assumes a high level of trust
in the hardware vendor. Several severe attacks have exposed vulnerabilities
of SGX\cite{1702.07521,1702.08719,1703.06986,asyncshock,Xu:2015:CAD:2867539.2867677,tsgx,1611.06952,rollbackSGX,cryptoeprint:2017:736,203696,intelSGXBug,SGXnoPageFaults,jang:sgx-bomb,DBLP:journals/corr/XiaoLCZ17,leakyCauldron,SGXStep,spectreSGX,1802.09085,cachequote,vanbulck2018foreshadow,guarddilemma,nemesis,2018arXiv181105441C,splitSpectre,SMoTherSpectre,ZombieLoad2019,ridl}:
all the current proposed and existing blockchains whose security rest
on SGX aren't providing detailed explanations and proofs on how they
are defending against these attacks. Unlike software bugs, new hardware
would have to be deployed to fix these kinds of bugs, sometimes re-architecting
the full solution.

Hence, our design employs secure multi-party computation. Secure multi-party
computation is more mature than the fully homomorphic methods, and
has a less trusting threat model than trusted execution approaches.
MPC protocols have solid security proofs based on standard assumptions
and efficient implementation. Current MPC technologies that can be
used in production\cite{EPRINT:ABPP15} include: secret sharing, garbled
circuits, oblivious transfer and ORAMs. Secret sharing based schemes
(i.e., many-round, dependent on the depth of the circuit) can be faster
than garbled circuits/BMR (i.e., constant-round) in low-latency settings
(LANs): when latency is large or unknown, it's better to use constant-round
protocols\cite{gmwvsyao,cryptoeprint:2016:1066}.

A preferred approach is to use modular secure computation frameworks:
for security reasons, at least two secure computation frameworks should
be available, offering different cryptographic and security assumptions;
in case the security of one of them is compromised, there will be
a second option.

\subsection{Off-chain computation}

Even when using the fastest available secure multi-party computation
techniques, the overhead would be very significant if secure computations
would have to be executed on every full node of a blockchain, as have
been previously proposed\cite{buterinSecretDAO}. In Ethereum, at
current prices (380 \$/ETH, 21 Gwei/gas, 30/August/2017), multiplying
or dividing 2 plaintext integers 10 million times costs:
\[
5\frac{\mbox{Gas}}{\mbox{ops}}\cdot10000000\,\mbox{ops}\cdot0.000000001\frac{\mbox{ETH}}{\mbox{Gwei}}\cdotp21\frac{\mbox{Gwei}}{\mbox{Gas}}\cdot380\frac{\mbox{\$}}{\mbox{ETH}}=\$399
\]
and to store 32768 words of 256 bits (i.e. 1 megabit), it costs:
\[
2000\frac{\mbox{Gas}}{\mbox{SSTORE}}\cdot32768\,\mbox{ops}\cdot0.000000001\frac{\mbox{ETH}}{\mbox{Gwei}}\cdotp21\frac{\mbox{Gwei}}{\mbox{gas}}\cdot380\frac{\mbox{\$}}{\mbox{ETH}}=\$523
\]

However, multiplying 2 plaintext integers 10 million times on a modern
computer takes 0.02 seconds: since it costs \$0.004/hour (Amazon EC2
t2.nano 1-year Reserved Instance), the same multiplications cost
\[
\frac{\$0.004\,\mbox{hour}}{3600\,\mbox{seconds/hour}}\cdot0.02\,\mbox{seconds}=\$0.000000022
\]
It's order of magnitude more expensive, concretely
\[
\frac{\$399}{\$0.000000022}=18136363636.363636364
\]

That is, more than 18 billions times more expensive: and as the price
of Ether keeps rising, the costs of computation will also rise. On
another note, current state-of-the-art secure computation executes
7 billion AND gates per second on a LAN setting \cite{EPRINT:AFLNO16},
but an 8-core processor executes 316 GIPS at the Dhrystone Integer
Native benchmark (Intel Core i7-5960X), an approximate slowdown of
\[
\frac{316\,\mbox{GIPS}}{7\,\mbox{Billion gates/second}}=45.142857143\,\mbox{x}
\]
which must not be additionally imposed to the overhead/overcost of
a public permissionless distributed ledger: on the hand, it also means
that there are overheads/overcosts being accepted which are higher
than those imposed by secure computation or verifiable computation
($10^{5}-10^{7}$ for proving correctness of the execution \cite{Walfish:2015:VCW:2728770.2641562}).

We address on-chain and off-chain secure computations separately:
\begin{itemize}
\item For on-chain secure computations, the validation of transactions happens
through the replicated execution of the smart contract and given the
fault assumption underlying the consensus algorithm. Because privacy-preserving
smart contracts introduce a significant resource consumption overhead,
it is important to prevent the execution of the same privacy-preserving
smart contract on every node. We relax the full-replication requirement
for PBFT consensus: a “query” command could be executed on a restricted
set of nodes (large enough to provide consensus), and then an “invoke”
store the result in the distributed ledger. Finally, achieving consensus
is a completely deterministic procedure, with no possible way for
differences.
\item Regarding off-chain secure computations, simple oracle-like calls
could be considered or other more complex protocols\cite{aeternity,truebit,iEx.ec}
for scalable off-chain computation.
\end{itemize}

\subsubsection{Latency and its impact}

As previously mentioned, MPC protocols can be roughly divided into
two classes: constant-round protocols, ideal for high latency settings;
and protocols with rounds dependent on the depth of the evaluated
circuit, usually faster but only on very low latency settings. The
present paper proposes the use of two protocol suites to get the maximum
performance, independent of the latency:
\begin{itemize}
\item if all parties are on a low latency setting, they could use a very
fast secret-shared protocol\cite{EPRINT:AFLNO16,EPRINT:BogLauWil08}
\item but if parties are on a high latency setting, they must use a constant-round
protocol\cite{cryptoeprint:2017:189,emp-toolkit,cryptoeprint:2016:1066,cryptoeprint:2017:214,cryptoeprint:2017:862,improvementsBMR,cryptoeprint:2018:208}.
Alternatively, they could outsource the secure computation to a cloud
setting (see \secref{Appendix-outsource}), to use a faster secret-shared
protocol.
\end{itemize}

\subsection{Blockchain Solutions}

The guarantees of privacy, correctness and verifiability are designed
for the more threatening setting of public permissionless blockchains,
although the smart contracts can also be used on private permissioned
blockchains: it's also preferred that private blockchains keep their
communications open to public blockchains to allow the use of smart
contracts of public utility.

We considered three blockchains solutions:
\begin{itemize}
\item Cothority\cite{cothority}/ByzCoin\cite{byzcoin}/OmniLedger\cite{cryptoeprint:2017:406}
is a next-generation permissionless distributed ledger designed for
high-throughtput while preserving long-term security, supporting Visa-level
workloads and confirming typical transactions in under two seconds. 
\item Hyperledger Fabric\cite{cachin2016architecture} is a good fit for
executing complex computational procedures like secure multi-party
computation and formal verification techniques: including such complex
procedures on its chaincodes (i.e. smart contracts) requires no special
design considerations.
\item Ethereum\cite{ethereum} is a public permissionless blockchain and
integrating complex procedures on its smart contracts is much trickier:
executed code consumes \textit{gas}, thus the general pursuit to minimize
the number of executed instructions. For complex procedures, oracles
are the best option (external data providers, not to be confused with
random oracles in cryptography or oracle machines in complexity theory).
The initial purpose of oracles is to provide data that didn't belong
in the blockchain (e.g., web pages) because decentralized applications
that achieve consensus shouldn't rely on off-chain sources of information.
But the mechanism can also be used for off-chain execution of complex
code. The steps to incorporate an oracle are:

\begin{enumerate}
\item Contract call to on-chain contract that executes complex computational
procedures (e.g., secure multi-party computation). Said call must
include enough \textit{gas} and the correct parameters:

\begin{enumerate}
\item Parameters should be encrypted to prevent that other participants
of the blockchain inspect them: a public key should be available for
this purpose; only the executing oracle should be able to decrypt
the parameters using the corresponding private key. Said executing
oracle must be implemented as a trusted server of the calling party;
otherwise, a more complex protocol involving outsourced oblivious
transfer must be used.
\item Contributed \textit{gas} is used to cover the costs of returning results
to the calling contract.
\end{enumerate}
\item The executing oracle receives the contract call, decrypts the data
and proceeds to the off-chain execution of the complex computational
procedure.
\item The executing oracle returns back the results to the calling contract
address.
\item Calling contract obtains the results: if large results are expected
(e.g., some few kilobytes) it's better to store them on IPFS\cite{1407.3561}
to prevent \textit{gas} costs, and what would be returned is a pointer
to said results.
\end{enumerate}
\end{itemize}
Regarding Bitcoin\cite{bitcoin}, although it would be possible to
use oracles\cite{orisi} similar to the ones used for Ethereum, these
are hardly used due to a combination of high transaction fees, high
confirmation time and low transaction rate.

This work is going to be open-sourced supporting Cothority\cite{cothority},
ByzCoin\cite{byzcoin} and OmniLedger\cite{cryptoeprint:2017:406}.

\subsubsection{Alternative Protocols and Standards}

Although this paper is focused on blockchains, it could be adopted
to other financial standards such as:
\begin{itemize}
\item Financial Information eXchange\cite{fixml}: messaging standard for
trade communication in the equity markets, with presence in the foreign
exchange, fixed income and derivatives market. 
\item Financial Products Markup Language (FpML)\cite{fpml}: XML messaging
standard for the Over-The-Counter derivatives industry.
\end{itemize}

\subsection{Functionality and Protocol}

The functionalities and protocols of this sub-section and section
\ref{sec:Functionalities-and-Protocols} constitute an open framework
on which to instantiate different secure multi-party computation protocols,
thus benefiting from upcoming research advances.

In this sub-section we present our secure protocol for private smart
contracts, consisting of the following standard functionality:

\fbox{\begin{minipage}[t]{1\columnwidth}%
\textbf{Functionality 4.1: Secure computation of smart contracts}
\begin{itemize}
\item \textbf{Parties:} $E_{1},...,E_{N}$, set of nodes $N_{i}$ of a blockchain
$B$
\item \textbf{Inputs:} smart contract $SC$, private inputs from parties
$E_{1}:\overrightarrow{x},E_{2}:\overrightarrow{y},...,E_{N}:\overrightarrow{z}$
\item \textbf{The functionality:}

\begin{enumerate}
\item Secure computation of smart contract $SC$
\item Results are returned and/or saved on the blockchain $B$
\end{enumerate}
\item \textbf{Output:} results from the secure computation $E_{1}:\overrightarrow{r_{1}},E_{2}:\overrightarrow{r_{2}},...,E_{N}:\overrightarrow{r_{N}}$.\end{itemize}
\end{minipage}}

Functionality 4.1 (Secure computation of smart contracts) is implemented
by the following protocol:\\

\fbox{\begin{minipage}[t]{1\columnwidth}%
\textbf{Protocol 4.2: Realising Functionality 4.1 (Secure computation
of smart contracts)}
\begin{itemize}
\item \textbf{Parties:} $E_{1},...,E_{N}$, set of nodes $N_{i}$ of a blockchain
$B$
\item \textbf{Inputs:} smart contract $SC$, private inputs from parties
$E_{1}:\overrightarrow{x},E_{2}:\overrightarrow{y},...,E_{N}:\overrightarrow{z}$
\item \textbf{The protocol:}

\begin{enumerate}
\item Parties $E_{1},...,E_{N}$ proceed to execute the smart contract $SC$:

\begin{enumerate}
\item The private inputs $\overrightarrow{x},\overrightarrow{y},...,\overrightarrow{z}$
and non-private inputs are sent to nodes $N_{i}$: 

\begin{enumerate}
\item Technically, this could be a HTTPS/REST call or an oracle call to
executing nodes $N_{i}$. 
\item The number of executing nodes $N_{i}$ depends on the setting of the
blockchain $B$ (public/private permissionless/permissioned): that
is, it could range from every node of the blockchain $B$ to just
a subset of nodes under PBFT consensus.
\item On permissioned blockchains, each executing node $N_{i}$ should be
a server owned/operated by the corresponding calling party $E_{i}$
(i.e., the servers are assumed to be trusted and the adversary is
on the network); but if the nodes are being run on a public cloud
or any other server outside the full control of the corresponding
calling party $E_{i}$, then a protocol for outsourcing secure computations
must be used (see \secref{Appendix-outsource}).
\end{enumerate}
\item The executing nodes $N_{i}$ proceed to execute the smart contract
$SC$: the execution command from parties $E_{1},E_{2},...,E_{N}$
contains the preferred secure computation engine to be used (i.e.,
there could be multiple execution engines with different protocols
under various security assumptions; if the parties don't agree, the
secure computation will not be carried out).
\end{enumerate}
\item If consensus on the results of the previous computation is reached,
then said results $\overrightarrow{r_{1}},\overrightarrow{r_{2}},...,\overrightarrow{r_{N}}$
could be returned to executing parties $E_{1},E_{2},...,E_{N}$ and/or
written on the blockchain $B$.
\end{enumerate}
\item \textbf{Output:} results from the secure computation $E_{1}:\overrightarrow{r_{1}},E_{2}:\overrightarrow{r_{2}},...,E_{N}:\overrightarrow{r_{N}}$.\end{itemize}
\end{minipage}}\\

The security of the protocol is proved on \secref{Security-Analysis}.

\subsection{Experimental Results}

Table \ref{tab:Execution-times} summarizes the execution cost for
several example applications, chosen for their economic significance:
\begin{enumerate}
\item Millionaire's Problem (i.e., determining who's got the bigger number
without revealing anything else)
\item Second-price auction: sealed-bid auction not revealing the bids between
the participants and without an auctioneer. The highest bidder wins,
but the price paid is the second-highest bid\cite{JOFI:JOFI2789}.
\item European Exchange Options: valuation of an option (the right, but
not the obligation) using Margrabe's formula\cite{RePEc:bla:jfinan:v:33:y:1978:i:1:p:177-86}
to exchange one risky asset for another risky asset at the time of
maturity; this example is useful to hedge private portfolios of volatile
crypto-tokens of physical assets. Suppose two risky assets with prices
$S_{1}\left(t\right)$ and $S_{2}\left(t\right)$ at time $t$ and
each with a constant dividend yield $q_{i}$: we calculate the option
to exchange asset 2 for asset 1 has a payoff $max\left(0,S_{1}\left(t\right)-S_{2}\left(t\right)\right)$,
\begin{eqnarray*}
Price\, option & = & S_{1}e^{-q_{1}t}N\left(d_{1}\right)-S_{2}e^{-q_{2}t}N\left(d_{2}\right)\\
\sigma & = & \sqrt{\sigma_{1}^{2}+\sigma_{2}^{2}-2\rho\sigma_{1}\sigma_{2}}\\
d_{1} & = & \frac{\ln\left(\frac{S_{1}}{S_{2}}\right)+\left(q_{2}-q_{1}+\frac{\sigma^{2}}{2}\right)t}{\sigma\sqrt{t}}\\
d_{2} & = & d_{1}-\sigma\sqrt{t}
\end{eqnarray*}
where $\rho$ is the Pearson's correlation coefficient of the Brownian
motion of $S_{i}$ and $\sigma_{i}$ are the volatilies of $S_{i}$.
Private inputs are $\sigma_{i}$, $S_{i}$ and $q_{i}$.
\item Currency Exchange Options: valuation of an option (the right, but
not the obligation) using the model of Garman-Kohlhagen\cite{RePEc:eee:jimfin:v:2:y:1983:i:3:p:231-237}
to exchange one currency for another at a fixed price; this example
is useful to hedge private portfolios of volatile crypto-currencies.
Suppose two risky currencies with different interest rates but constant
exchange rate: we calculate the calls and puts with the following
equations, 
\begin{eqnarray*}
Call & = & S_{0}e^{-\rho t}N\left(d_{1}\right)-X^{-rt}N\left(d_{2}\right)\\
Put & = & Xe^{-rt}N\left(-d_{2}\right)-S_{0}e^{-\rho t}N\left(-d_{1}\right)\\
d_{1} & = & \frac{\ln\left(\frac{S_{0}}{X}\right)+\left(r-\rho+\frac{\sigma^{2}}{2}\right)t}{\sigma\sqrt{t}}\\
d_{2} & = & d_{1}-\sigma\sqrt{t}
\end{eqnarray*}
where $r$ is the continuously compounded domestic interest rate,
$\rho$ is the continuously compounded foreign interest rate, $S_{0}$
is the spot rate, $X$ is the strike price, $t$ is the time to maturity
and $\sigma$ is the foreign exchange rate volatility. Private inputs
are $r,\rho$ and $S_{0}$ (with constant exchange rate).
\item Crowdfunding smart contract: a simple crowdfunding smart contract
is considered, that checks if the minimum contribution target is reached
and then returns the raised amount, or 0 otherwise.\\
\begin{algorithm}[H]
int crowdfund(int inputX, int inputY) \{\\
~~int ret;\\
~~int sum = inputX + inputY;\\
~~int minimum = 1000;\\

~~if ( sum >= minimum ) ret = sum; else ret = 0;\\

~~return ret;\\
\}

\protect\caption{Crowdfunding smart contract}
\end{algorithm}

\item DAO-like Investment Fund: a simple emulation of an investment fund
is considered, that check if the minimum contribution target is reached
and then returns the principal compounded after a number of years.\\
\begin{algorithm}[H]
float daoInvestFund(int inputX, int inputY) \{\\
~~float ret;\\
~~int sum = inputX + inputY;\\
~~int minimum = 1000;\\

~~if ( sum >= minimum ) ret = sum {*} (1 + (0.04/4))\textasciicircum{}(4{*}5);
else ret = 0;\\

~~return ret;\\
\}

\protect\caption{DAO-like Investment Fund}
\end{algorithm}

\item Double auction for decentralized exchanges: the secrecy of the prices
and quantities is maintained during the off-chain secure computation
and the settlement is finally done on-chain. Secure computation is
very useful in these settings because it enables secret order books,
increasing liquidity and price discovery under specially designed
mechanisms\cite{EPRINT:Jutla15}.
\end{enumerate}
\begin{table}[H]
\centering{}%
\begin{tabular}{|c|c|c|c|c|}
\hline 
Example & AND Gates & Time A & Time B & Time C\tabularnewline
\hline 
\hline 
Millionaire (int) & 96 & 1 & 1 & 0.485\tabularnewline
\hline 
Second-price Auction (int) & 192 & 1.1 & 1 & 0.862\tabularnewline
\hline 
European Exchange Options (float) & 267507 & 810 & 1273.8 & 1185.49\tabularnewline
\hline 
Currency Call Options (float) & 323529 & 979.6 & 1540.6 & 957.77\tabularnewline
\hline 
Crowdfunding smart contract(int) & 128 & 1 & 1 & 0.458\tabularnewline
\hline 
DAO-like Investment Fund(int) & 2144 & 6.5 & 10.2 & 0.458\tabularnewline
\hline 
Double auction (int) & 567829 & 1419 & 2552.2 & 2137.2\tabularnewline
\hline 
\end{tabular}\protect\caption{\label{tab:Execution-times}Execution times for application experiments.
The times shown are the wall clock time in millisecond to complete
the secure multi-party computation (i.e., from the initial oblivious
transfers to the final reveling of the results) for various secure
multi-party computation engines (A: semi-honest; B:malicious security;
C: secret-sharing (Sharemind\cite{EPRINT:BogLauWil08} estimation)).}
\end{table}

\subsection{Modes of Interaction}

\begin{figure}[H]
\centering{}\includegraphics[scale=0.5]{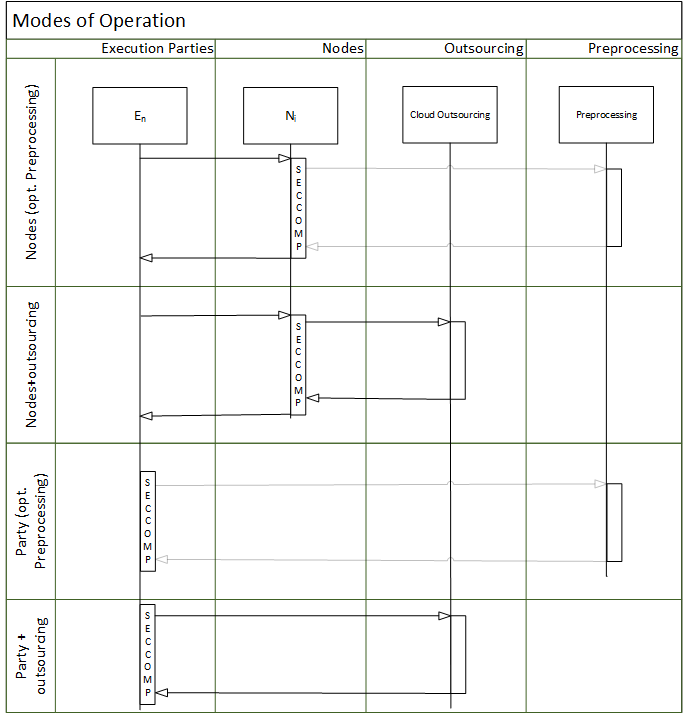}\protect\caption{Supported modes of interaction}
\end{figure}
Secure computation can be carried on the nodes of the blockchain or
on the parties themselves. Additionally, said secure computations
can be outsourced to the cloud (see \secref{outsourcing}) or use
mined pre-processing data for secure multi-party computation (see
\secref{miningPreprocess}).

\subsection{Outsourcing Secure Computations for Cloud-based Blockchains\label{sec:outsourcing}}

The following scheme makes use of a multi-party non-interactive key
exchange\cite{EPRINT:BonZha13b,cryptoeprint:2017:152} to establish
a shared secret between the computing parties and the executing nodes
of the blockchain: replacing the NIKE protocol would require a PKI
infrastructure between the computing parties and the executing nodes
that will be used to establish shared secrets between them; it's viable
but also more cumbersome within the context of public permissionless
ledgers. A multiparty non-interactive key exchange (NIKE) scheme consists
of the following algorithms:
\begin{itemize}
\item $Setup(M,N,\lambda)$: this algorithm outputs public parameters $params$,
taking as input $M$, the maximum number of parties that can derive
a shared key, $N$, the maximum number of parties in the scheme, and
the security parameter $\lambda$.
\item $Publish(params,\, i)$: this algorithm outputs the party's secret
key $sk_{i}$, and a public key $pk_{i}$ which the party publishes;
inputs are public parameters $params$ and the party's index $i$.
\item $KeyGen(params,i,sk_{i},S,\{pk_{j}\}_{j\in S})$: this algorithm outputs
a shared key $k_{S}$; inputs are public parameters $params$, the
party's index $i$, the set $S\subseteq\left[N\right]$ of size at
most $M$ and the set of public keys $\{pk_{j}\}_{j\in S}$ of the
parties in $S$.
\end{itemize}
As specified below, the functionality for outsourcing secure computations
satisfies the following requirements: 
\begin{itemize}
\item \textit{offline parties}: no parties need to be involved when the
outsourced secure computation is executed.
\item \textit{private parameters reuse}: parties don't need to re-upload
their private parameters.
\item \textit{restricted collusion}: an adversary may corrupt a subset of
the parties or the nodes of the blockchain, but not both\cite{EPRINT:KamMohRay11}.\\

\end{itemize}
\fbox{\begin{minipage}[t]{1\columnwidth}%
\textbf{Functionality B.1: Outsourcing for Cloud-based Blockchains}
\begin{itemize}
\item \textbf{Parties:} $E_{1},E_{2},...,E_{N}$, set of nodes $N_{i}$
of a blockchain $B$ ($N_{G}$ being the garbling node and $N_{E}$
being the evaluator node)
\item \textbf{Inputs:} smart contract $SC$, private inputs from parties
$E_{1}:\overrightarrow{x_{i}},E_{2}:\overrightarrow{x_{i}},...,\, E_{N}:\overrightarrow{x_{i}}$
\item \textbf{The functionality:}

\begin{enumerate}
\item $KeyExchange$: run a multi-party NIKE between parties $E_{1},E_{2},...,E_{N}$
and nodes $N_{i}$ and publish their public keys $pk_{i}$; from said
public keys, a secret key $k_{S}$ is derived.
\item $SendPrivateParameters(E_{i})$: send $\left(SendPrivateParameters,\, E_{i}\right)$
to $N_{E}$.
\item $SECCOMP\left(SC\right)$: upon the reception of a computation request
from party $E_{k}$ for a function in smart contract $SC$, compute
$y=SC\left(x_{k},\left\{ x_{j}|\forall j\in N:E_{j}\right\} \right)$;
send $\left(result,y\right)$ to $E_{k}$ and $\left(SECCOMP,SC,E_{k}\right)$
to $N_{E}$.\end{enumerate}
\end{itemize}
\end{minipage}}\\

Let $Enc$ be a symmetric-key encryption algorithm secure against
Chosen-Plaintext Attacks. The following protocol uses a pivot table
during the secure computation, allowing the evaluator node to obliviously
map the encoded inputs by the parties to the encoding expected by
the circuit created by the garbling node. To implement Functionality
B.1 (Outsourcing for Cloud-based Blockchains), the following protocol
is proposed: \\

\fbox{\begin{minipage}[t]{1\columnwidth}%
\textbf{Protocol B.2: Realising Functionality B.1 (Outsourcing for
Cloud-based Blockchains)}
\begin{itemize}
\item \textbf{Parties:} $E_{1},E_{2},...,E_{N}$, set of nodes $N_{i}$
of a blockchain $B$ ($N_{G}$ being the garbling node and $N_{E}$
being the evaluator node)
\item \textbf{Inputs:} smart contract $SC$, private inputs from parties
$E_{1}:\overrightarrow{x_{i}},E_{2}:\overrightarrow{x_{i}},...,E_{N}:\overrightarrow{x_{i}}$
\item \textbf{The protocol:}

\begin{enumerate}
\item $KeyExchange$: run a multi-party NIKE between parties $E_{1},E_{2},...,E_{N}$
and nodes $N_{i}$ and publish their public keys $pk_{i}$; from said
public keys, a secret key $k_{S}$ is derived.

\begin{enumerate}
\item The setup phase is executed: $params:=Setup(M,N,\lambda)$
\item Each party $i$ runs $pk_{i},sk_{i}:=Publish(params,\, i)$ and publish
$pk_{i}$
\item Each party $i$ runs $k_{S}:=KeyGen(params,i,sk_{i},S,\{pk_{j}\}_{j\in S})$
to obtain the shared secret key
\end{enumerate}
\item $SendPrivateParameters(E_{i}:\overrightarrow{x_{i}},k_{S})$: party
$E_{i}$ chooses nonce $n_{i}$ and for every bit of $\overrightarrow{x_{i}}$
computes $X_{il}^{\overrightarrow{x_{i}}\left[l\right]}=PRF_{k_{S}}\left(\overrightarrow{x_{i}},l,n_{i}\right)$;
then, party $E_{i}$ sends these to $N_{E}$ and also sends $n_{i}$.
$N_{E}$ stores $\left(\left(X_{i1}^{x_{i}\left[1\right]},...,X_{il}^{x_{i}\left[l\right]}\right),n_{i}\right)$.
\item $SECCOMP\left(SC\right)$:

\begin{enumerate}
\item Secure computation at node $N_{G}$: the garbling node $N_{G}$ compiles
and garbles smart contract $SC$ into the garbled circuit $GC_{SC}$.
Then, for each party $E_{j}$ and index $l$ of the length of $\overrightarrow{x_{i}}$:

\begin{enumerate}
\item For each party $E_{j}$, retrieve nonce $n_{j}$ from $N_{e}$.
\item Compute pivot keys: generate $s_{jl}^{0}=PRF_{k_{S}}\left(0,l,n_{j}\right),\, s_{jl}^{1}=PRF_{k_{S}}\left(1,l,n_{j}\right)$.
\item Compute and save garbled inputs: using the pivot keys, encrypt $Enc_{s_{jl}^{0}}\left(w_{jl}^{0}\right)$
and $Enc_{s_{jl}^{1}}\left(w_{jl}^{1}\right)$; then save them into
pivot table $P_{q}\left[j,l\right]$ in random order.
\end{enumerate}
\item Secure computation at node $N_{E}$: for every bit of $x_{j}$ and
using the encoding $X_{jl}^{x_{j}\left[l\right]}$, decrypt the correct
garbled values of each $E_{j}$ from $P_{q}$; evaluate the garbled
circuit $GC_{SC}$ and send the output to $N_{G}$.
\item Result at $N_{G}$: decode the output to obtain the results $\overrightarrow{r_{1}},\overrightarrow{r_{2}},...,\overrightarrow{r_{N}}$
. 
\end{enumerate}
\end{enumerate}
\item \textbf{Output: }results from the secure computation $E_{1}:\overrightarrow{r_{1}},E_{2}:\overrightarrow{r_{2}},...,E_{N}:\overrightarrow{r_{N}}$.\end{itemize}
\end{minipage}}\\

\begin{thm}
(Outsourcing for Cloud-Based Blockchains). Assuming secure channels
between the parties $E_{1},E_{2},...,E_{N}$ and the nodes $N_{i}$
of the blockchain $B$, Protocol B.2 securely realises Functionality
B.1 against static corruptions in the semi-honest security model with
the garbling scheme satisfying privacy, obliviousness and correctness.\end{thm}
\begin{proof}
See Appendix \ref{sec:Appendix-outsource}.
\end{proof}

\subsection{Mining pre-processing data for Secure Multi-Party Computation\label{sec:miningPreprocess}}

The Proof-of-Work of crypto-currencies consumes great amounts of computational
power and electricity: 14TWh for Bitcoin\cite{bitcoinConsumption}
and 4.25TWh for Ethereum\cite{ethereumConsumption} just calculating
hash functions. Miners could create pre-processing data for secure
multi-party computation and be incentivised with crypto-tokens: 50-80\%
of total execution time is spent on pre-processing depending on the
function/protocol, thus they would be profiting from ``renting''
their computational power to save significant amounts of computational
time.

A recent paper considers the case of outsourcing MPC-Preprocessing\cite{cryptoeprint:2017:262}
to third parties and then computing parties reusing the pre-processed
data (i.e., SPDZ-style authenticated shares) in the online phase:
it's especially efficient if there is a subset of parties trusted
by all the computing parties which can do all of the pre-processing
and then distribute it to the computing parties. It's based on the
re-sharing technique of \cite{BGW88}, but without using zero-knowledge
proofs: it also fits into another recent protocol for secure multi-party
computation\cite{cryptoeprint:2017:189,emp-toolkit} that is much
more efficient for WAN networks than SPDZ derivatives, except that
\cite{cryptoeprint:2017:189} uses BDOZ-style authenticated shares
instead of SPDZ-style authenticated shares:
\begin{itemize}
\item BDOZ-style\cite{EPRINT:BDOZ10} authenticated shares: for each secret
bit $x$, each party holds a share of $x$; for each ordered pair
of parties $\left(P_{i},P_{j}\right)$, $P_{i}$ authenticates its
own share to $P_{j}$. Specifically, when party $P_{i}$ holds a bit
$x$ authenticated by $P_{j}$, this means that $P_{j}$ is given
a random key $K_{j}\left[x\right]\in\left\{ 0,1\right\} ^{k}$ and
$P_{i}$ is given the MAC tag $M_{j}\left[x\right]:=K_{j}\left[x\right]\oplus x\triangle_{j}$,
where $\triangle_{i}\in\left\{ 0,1\right\} ^{k}$ is a global MAC
key held by each party. Let $\left[x\right]^{i}$ denote an authenticated
bit where the value of $x$ is known to $P_{i}$ and is authenticated
to all other parties: that is, $\left(x,\left\{ M_{k}\left[x\right]\right\} _{k\neq i}\right)$
is given to $P_{i}$ and $K_{j}\left[x\right]$ is given to $P_{j}$
for $j\neq i$. An authenticated shared bit $x$ is generated by XOR-sharing
$x$ and then distributing the authenticated bits $\left\{ \left[x^{i}\right]^{i}\right\} $:
let $\left\langle x\right\rangle :=\left(x^{i},\left\{ M_{j}\left[x^{i}\right],K_{i}\left[x^{j}\right]\right\} _{j\neq i}\right)$
denote the collection of these authenticated shares for $x$.
\item SPDZ-style\cite{EPRINT:DPSZ11} authenticated shares: each party holds
a share of a global MAC key; for a secret bit $x$, each party holds
a share $x$ and a share of the MAC on $x$. Specifically, a value
$x\in\mathbb{\mathbb{F}}_{q}$ is secret shared among parties $P$
by sampling $\left(x_{i}\right)_{i\in P}\leftarrow\mathbb{F}_{q}^{\left|P\right|}$
subject to $x=\sum_{i\in P}x_{i}$ with a party $i$ holding the value
$x_{i}$; the MAC is obtained by sampling $\left(\gamma\left(x\right)_{i}\right)_{i\in P}\leftarrow\mathbb{F}_{q}^{\left|P\right|}$
subject to $\sum_{i\in P}\gamma\left(x\right)_{i}=\alpha\cdot x$
and party $i$ holding the share $\gamma\left(x\right)_{i}$: let
$\left\langle x\right\rangle :=\left(\left(x_{i}\right)_{i\in P},\left(\gamma\left(x\right)_{i}\right)_{i\in P}\right)$
to denote that $x$ is an authenticated secret share value, where
party $i\in P$ holds $x_{i}$ and $\gamma\left(x\right)_{i}$, under
a global MAC key $\alpha=\sum_{i\in P}\alpha_{i}$.
\end{itemize}
It's straightforward to adapt the protocol $\varPi_{Prep}^{R\rightarrow Q,\overline{A}}$
from \cite{cryptoeprint:2017:262} to process BDOZ-style authenticated
shares instead of SPDZ-style authenticated shares: thus, the authenticated
shares generated by the offline pre-processing parties would be reshared
amongst the computing parties as required before executing any secure
multi-party computation and the pre-processing parties would be incentivised
with crypto-tokens. Another recent paper\cite{poolGC} concurrently
considered the generation of a pool of garbled gates to provide on-demand
secure computation services between two parties, a similar concept
except that it doesn't include outsourcing from third parties.

\subsubsection{Security setting}

Let $E$ denote the set of $n_{E}$ parties who are to run the online
phase and $O$ the set of $n_{O}$ outsourcing parties that run the
pre-preprocessing for the executing parties $E$ (respectively $Q$
and $R$ in \cite{cryptoeprint:2017:262}). Adversaries can corrupt
a majority of parties in $E$ and in $O$, but not all parties in
$E$ nor all parties in $O$: that is, each honest party in $E$ believes
that there is at least one honest party in $O$, but they may not
know which one is honest. Let $t_{E}$ denote the number of parties
in $E$ that are corrupt (resp. $t_{O}$ in $O$): the associated
ratios are denoted by $\epsilon_{E}=t_{E}/n_{E}$ and $\epsilon_{O}=t_{O}/n_{O}$.
The executing parties $E$ are divided into subsets $\left\{ E_{i}\right\} _{i\in O}$
forming a cover, with a party in $O$ associated with each subset:
a cover is defined to be secure if at least one honest party in $O$
is associated to one honest party in $E$.

In public permissionless blockchains, it's expected that there is
no prior trust relation between parties in $E$ and $O$: an efficient
algorithm is offered in \cite{cryptoeprint:2017:262} for assigning
a cover to the network of parties so that the adversary can only win
with negligible probability in the security parameter $\lambda$ in
the case where the covers are randomly assigned, and working out the
associated probability of obtaining a secure cover. It assumes that
each party in $O$ sends to the same number of parties $l\geq\left\lceil n_{E}/n_{O}\right\rceil $
in $E$. The high-level idea of the algorithm is the following:
\begin{enumerate}
\item For each party in $E$, we assign a random party in $O$, until each
party in $O$ has $\left\lceil n_{E}/n_{O}\right\rceil $ parties
in $O$ assigned to it
\item For each party in $O$, we assign random parties in $E$ until each
party in $O$ has $l$ total parties which it sends to.
\end{enumerate}
The probability to obtain a secure cover is given by 
\[
1-\frac{t_{E}!\cdot\left(n_{E}-\left(n_{O}-t_{O}-1\right)\left\lceil n_{E}/n_{O}\right\rceil \right)!}{n_{E}!\cdot\left(t_{E}-\left(n_{O}-t_{O}-1\right)\left\lceil n_{E}/n_{O}\right\rceil \right)!}\cdot\left(\frac{\left(\begin{array}{c}
t_{E}-\left\lceil n_{E}/n_{O}\right\rceil \\
l-\left\lceil n_{E}/n_{O}\right\rceil 
\end{array}\right)}{\left(\begin{array}{c}
n_{E}-\left\lceil n_{E}/n_{O}\right\rceil \\
l-\left\lceil n_{E}/n_{O}\right\rceil 
\end{array}\right)}\right)^{n_{O}-t_{O}-1}
\]

In case where all but one party is corrupt in each $E$ and $O$,
then the probability to obtain a secure cover is given by $l/n_{E}$.

\subsubsection{Preventing Sybil attacks}

In the context of blockchains, Sybil attacks in which an attacker
creates a large number of pseudonymous identities can easily be prevented:
before running secure computations, any party could be required to
deposit some arbitrarily high amount of money on a smart contract
that would be confiscated in case any abnormal behaviour is detected
(e.g., selective failure attacks, aborts, ...). Note that this simple
technique maintains the anonymity of the parties and prevents that
the computed pre-processing data gets intentionally wasted.

\subsection{Other Applications}

Applications of this technology can be found on many commercial/financial
settings. Some of the most noteworthy are as follows:
\begin{itemize}
\item the most immediate application of secure multi-party computation is
the removal of third parties and the financial industry has plenty
of them: market makers, escrows, custodians, brokers, even banks themselves
are intermediaries between savers and borrowers.
\item crypto-banks: all the financial information contained within bank's
databases could be encrypted with secure computation techniques. Users
could make deposits, take loans and trade financial instruments without
ever revealing their financial positions/transactions to inquiring
third-parties.
\end{itemize}
Other applications mentioned in literature:
\begin{itemize}
\item economic/financial applications: financial exposures could be shared
between mutually distrusting parties without revealing any confidential
information to better control financial risks\cite{shareRiskExposure};
credit scoring\cite{EPRINT:DDNNT15,epub27630}; facilitate the work
of financial supervisors\cite{katzFederalReserve} without compromising
confidentiality; protect the privacy of data in online marketplaces\cite{cryptoeprint:2011:257};
re-implement the stock market \cite{EPRINT:Jutla15} without a trusted
auctioneer and no party learning the order book; remove escrows\cite{kumaresanPrivacyContracts}
with claim-or-refund transactions and secure computation.
\item game theory/mechanism design: remove trusted third-parties in tâtonnement
algorithms\cite{tatonnement} for one-time markets, allowing to privately
share the utility function of the involved parties without revealing
it and obtaining an incentive compatible protocol in the process;
privacy-preserving auctions\cite{cryptoeprint:2016:797,cryptoeprint:2017:439};
more generally, implement mechanisms respecting privacy to obtain
incentive-compatibility (e.g., auctions\cite{Naor99privacypreserving,EPRINT:ElkLip03})
or that incentivise data-driven collaboration among competing parties\cite{EPRINT:AzaGolPar15}.
\item statistics: benchmark the performance of companies within the same
sector\cite{EPRINT:BogTalWil11}; establish correlation and causation\cite{EPRINT:BKKRST15}
between confidential datasets.
\end{itemize}

\section{Verifiable Smart Contracts\label{sec:Verifiable-Smart-Contracts}}

After the DAO attack, there has been some work (\cite{formalEVM,Bhargavan:2016:FVS:2993600.2993611,coq-smartcontracts,idrisEVM,kevm})
to include formal methods in the development of smart contracts: unfortunately,
current solutions are very complex and cumbersome\cite{formalEVM,Bhargavan:2016:FVS:2993600.2993611,idrisEVM,kevm},
almost equivalent to formally verifying assembly code. Only very high-level
languages should be used to write smart contracts, not assembly-like
ones (EVM): even the C language should be considered too low-level
for these purposes because the required proofs must contain all kind
of details about memory management and pointers.

Not all smart contracts need to be formally verified, and not even
every part of their code should be. On permissioned blockchains, non-verified
smart contracts will be much more accepted than on public permissionless
blockchains.

Unlike other works that only consider the correctness of the computed
output and resort to resource-consuming zero-knowledge proofs \cite{EPRINT:BCGGMT14},
the mathematical proofs considered here are multi-purpose: invariants,
pre- and post-conditions, termination, correctness, security, resource
consumption, legal/regulatory (e.g., self-enforcement), economic (e.g.,
fairness, double-entry consistency, equity), functional and any other
desirable property that can be mathematically expressed.

\begin{figure}[H]
\centering{}\includegraphics[scale=0.5]{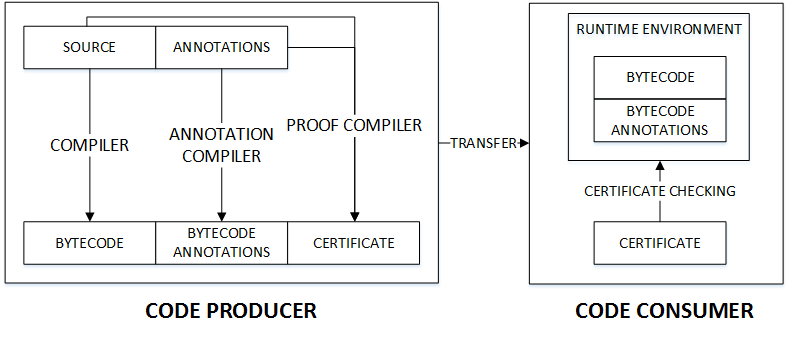}\protect\caption{Proof-Carrying Code infrastructure}
\end{figure}

The solution of how to execute untrusted code from potentially malicious
sources is not new, said problem was already considered for mobile
code: Proof-Carrying Code\cite{Necula:1998:SUA:648051.746192} prescribes
accompanying the untrusted code with proofs (certifiable certificates)
that can be checked before execution to verify their validity and
compare the conclusions of the proofs to the security policy of the
code consumer to determine whether the untrusted code is safe to execute.
On the positive side, no trust is required on the code producer and
there is no runtime overhead during execution, but the code must be
annotated (e.g., see pseudo-code annotation of the Program Listing
below) and detailed proofs generated, a process which can be complex
and costly. A novel variant combining PCC with Non-Interactive Zero-Knowledge
proofs is proposed in the next sub-section \ref{sec:ZK-Proofs}.

\begin{algorithm}[H]
class Account \{

~~int balance; // \textbf{\textit{\footnotesize{}invariant}} balance
>= 0;

~~// \textbf{\textit{\footnotesize{}requires}} amt >= 0 

~~// \textbf{\textit{\footnotesize{}ensures}} balance == amount 

~~Account(int amount) \{ balance = amount; \}\\

~~// \textbf{\textit{\footnotesize{}ensures}} balance == acc.balance 

~~Account(Account \_account) \{ balance = \_account.balance(); \}\\

~~// \textbf{\textit{\footnotesize{}requires}} amount > 0 \&\& amount
<= \_account.balance() 

~~// \textbf{\textit{\footnotesize{}ensures}} balance == \textbackslash{}old(balance)
+ amount 

~~//   \&\& \_account.balance == \textbackslash{}old(\_account.balance
- amount); 

~~transfer(int amount, Account \_account) 

~~\{ \_account.withdraw(amount); deposit(amount); \}\\

~~// \textbf{\textit{\footnotesize{}requires}} amount > 0 \&\& amount
<= balance 

~~// \textbf{\textit{\footnotesize{}ensures}} balance == \textbackslash{}old(balance)
- amount 

~~void withdraw(int amount) \{ balance -= amount; \}\\

~~// \textbf{\textit{\footnotesize{}requires}} amount > 0; 

~~// \textbf{\textit{\footnotesize{}ensures}} balance == \textbackslash{}old(balance)
+ amount 

~~void deposit(int amount) \{ balance += amount; \}\\

~~// \textbf{\textit{\footnotesize{}ensures}} \textbackslash{}result
== balance 

~~int balance() \{ return balance; \} 

\}

\protect\caption{Example pseudo-code with annotations}

\end{algorithm}

The verification of a program typically requires many annotations,
at least a 1:1 ratio of lines of code against specifications: tools
have been developed to automatically generate said annotations and
are very useful in this setting. In case of using an interactive theorem
prover, an extensive library of tactics must help to handle complex
cases. Smart contracts must be written with the specific purpose of
verification in mind, otherwise it becomes extremely complex to generate
complete proofs\cite{Woodcock:2007:CME:1341231.1341236}; regarding
the size, effort and duration of the verification process, there is
a strong linear relationship between effort and proof size\cite{Staples:2014:PPE:2652524.2652551}
and a quadratic relationship between the size of the formal statement
and the final size of its formal proof\cite{Matichuk:2015:EST:2818754.2818842}.

The most cost-effective option in some settings (e.g., permissioned
blockchains) could be the publication of the annotated smart contracts
to the blockchain and not using some of the more advanced options
like the PCC toolchain or any kind of theorem prover, reversing the
burden of proof to code consumers but in some sense helping them with
the provided annotations. Therefore, there is a scale of Verification
Levels when publishing smart contracts on blockchains:
\begin{enumerate}
\item Annotated smart contracts
\item Annotated smart contracts automatically tested using heuristics/concolic
execution
\item Annotated smart contracts with full/partial proofs
\item Annotated smart contracts with certifiable certificates (Proof Carrying
Code)
\end{enumerate}
Ultimately, as an example of their applicability, the use of the proposed
annotated smart contracts in combination with the zero-knowledge proofs
of \secref{ZK-Proofs} allows for an alternative way to implement
the proofs of assets, liabilities and solvency of exchanges of \cite{EPRINT:DBBCB15}.

\subsection{Case Study}

\begin{algorithm}[H]
class Crowdfunding \{

~~~~int minimum = 1000;

~~~~// requires 0 < n

~~~~// ensures \textbackslash{}result >= minimum

~~~~public int crowdfund(int n, int{[}{]} inputs) \{

~~~~~~int sum = 0;

~~~~~~// invariant 0 <= i \&\& i <= n

~~~~~~for (int i = 0; i < n; i++) \{

~~~~~~~~sum += inputs{[}i{]};

~~~~~~\}

~~~~~~return sum;

~~~~\} 

\}\protect\caption{Crowdfund example processed with PCC toolchain}

\end{algorithm}

Proofs are written in Coq: the following execution statistics of the
PCC toolchain are reported,
\begin{itemize}
\item Proof-generation time overhead: directly correlated to the size of
the compiled program being analysed. It follows approximately the
following formula on a modern laptop (Intel\textregistered{} Core™
i7-7500U 2.7Ghz): 
\end{itemize}
\[
time(bytecode\, size\, bs)=1.5+\left(\frac{bs}{1500}\right)\, secs
\]

\begin{itemize}
\item Proof-verification time overhead: (less) correlated to the size of
the compiled program being analysed. It follows approximately the
following formula on a modern laptop (Intel\textregistered{} Core™
i7-7500U 2.7Ghz):
\end{itemize}
\[
time(bytecode\, size\, bs)=0.25+\left(\frac{bs}{6000}\right)secs
\]

\begin{itemize}
\item Certificate size overhead: 30\%
\end{itemize}
\begin{figure}[H]
\centering{}\includegraphics{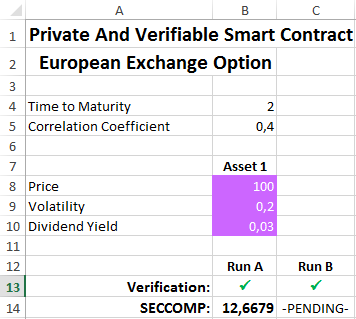}\protect\caption{Secure spreadsheet\cite{seccompSecureSpreadsheet} enabled for privacy-preserving
computation displaying the result of private and verifiable smart
contracts (e.g., cryptographically secure financial instruments and
their derivatives)}
\end{figure}

Not much thought has been given to the rather practical question of
what user interface should smart contracts use: calculations will
be done on the returned values of smart contracts, even encrypted
ones; and nested calculations on their inputs/outputs is also required.
A spreadsheet enabled for secure computation fits all the given requirements,
especially given that it's well accepted on the financial industry
and many are trained on its use.

\subsection{On the Size of Certifiable Certificates}

Since the invention of Proof Carrying Code\cite{Necula:1997:PC:263699.263712},
the question of how to efficiently represent and validate proofs\cite{Necula:1998:ERV:788020.788923}
has been a focus of much research. Different techniques are considered
in this work to reduce the size of certificates:
\begin{enumerate}
\item The reflection technique\cite{Allen90thesemantics}. This technique
decreases the size of the proof using the reduction offered by a proof
assistant: multiple explicit rewriting steps are replaced by implicit
reductions. In other words, implicit computation replaces explicit
deduction, achieving a reduction of proof size of orders of magnitude.
\item Deep embedding of the verification condition generator reduces the
size of proof terms due to the use of reflective tactics with the
subsequent reduction in the time needed to check the proofs.
\item Using hybrid methods that include static analysis: using the information
provided by type systems to eliminate unreachable paths reduces the
number of proof obligations. For example, null-pointer analysis is
used to prove that most accesses are safe and this information is
then employed to reduce the number of proof obligations.
\item Partitioning the certificate to reduce the memory required to check
it by running an interactive protocol between the code consumer and
producer.
\item Abstract interpretation is used to verify safety policies that reduce
the control flow graph. A fixed-point abstract model accompanies the
code whose validity implies compliance with safety policies: then,
the code consumer checks its validity in a single pass. Note that
there’s an inherent tradeoff between certificate size and checking
time: in order to reduce the size of the certificate, the code consumer
could generate the fixpoint but that would also increase the checking
time. Thus, it’s important to send the smallest subset of the abstract
model while maintaining an efficient single-pass check: that is, to
only store the information that the code consumer is not able to reproduce
by itself.
\item The size of certificates using proofs in sequent calculus is reduced
since it can be reconstructed from the used lemmas (i.e., using the
cut-inference rule). Proofs are represented in tables that a proof
engine will check\cite{DBLP:journals/corr/abs-0708-2252}: in order
to reduce the size of the table, some atomic formulas can be omitted
if the code consumer will be able to easily reprove them. 
\end{enumerate}
Other approaches that could be considered in order to reduce the size
of the proof include transmitting a proof generator\cite{2010LNCS.6480.249P}
to the code consumer instead of the proof itself, and executing the
proof generator on the consumer side to re-generate the proof using
a virtual machine.

\subsection{Zero-Knowledge Proofs of Proofs\label{sec:ZK-Proofs}}

Although it's possible to obfuscate certifiable certificates in such
a way that de-obfuscation wouldn't be any easier\cite{codeObfuscation}
while keeping the certifiable certificate sound and complete, this
level of security isn't acceptable in a formal cryptographic model.
When smart contracts are fully encrypted with homomorphic encryption/IO,
certifiable certificates reveal too much information about the code
and additional cryptographic protection is absolutely necessary: it's
possible to generate zero-knowledge proofs of proofs\cite{Blum87howto}
(or certifiable certificates, which are shorter), in such a way that
the code producer can convince the code consumer of the existence
and validity of proofs about the code without revealing any actual
information about the proofs themselves or the code of the smart contract.

Classically, this would require to come by an interactive proof system\cite{gmr85}
were the code consumer is convinced, with overwhelming probability,
of the existence and validity of proofs of the code through interaction
with a code producer; then, a zero-knowledge proof system will be
obtained using the methods of \cite{BGG90,directMKC}: for a more
detailed description on how to prove a theorem in zero-knowledge,
see \cite{Pope04provinga}. Lately, advances in verifiable computation
have produced advanced zero-knowledge proof systems to prove correctness
of remote execution: although many of these results could theoretically
be extended to prove general assertions about the code, the slowdowns
for proving would be higher than the current $10^{5}-10^{7}$ for
the very optimized case of proving correctness of executions\cite{Walfish:2015:VCW:2728770.2641562}.

Until the advent of methods to obtain zero-knowledge proofs from garbled
circuits\cite{EPRINT:JawKerOrl13} (i.e., general purpose ZK), zero-knowledge
proofs have been difficult to come by. \textit{ZKBoo}\cite{EPRINT:GiaMadOrl16},
a later development of \cite{EPRINT:JawKerOrl13}, creates non-interactive
zero-knowledge proofs for boolean circuits: this line of work is the
preferred choice to obtain NIZK-proofs of the validity of certifiable
certificates of smart contracts, because zk-SNARKs\cite{EPRINT:BCCT12}
would be very succinct, but with setup assumptions and much slower
to proof (e.g., the circuit \textit{\footnotesize{}C\_POUR} from Zerocash\cite{EPRINT:BCGGMT14}
has 4.109.330 gates and a reported execution time of 2 min). Lately,
\textit{ZKB++\cite{cryptoeprint:2017:279}} provides proofs that are
less than half the size than \textit{ZKBoo} and \textit{Ligero\cite{ligeroArguments}}
four time shorter than \textit{ZKB++.}

Using \textit{ZKBoo}, the statement to be proved would be “I know
a certifiable certificate such that $\phi\left(certificate\right)=true$”
for a certificate validation circuit $\phi$ and $L_{\phi}$ the language
$\left\{ true|\exists certificate\, s.t.\,\phi\left(certificate\right)=true\right\} $
with soundness error $2^{-80}$: minimizing the certificate validation
circuit $\phi$ as much as possible would be the most important optimization.

\paragraph*{Proof-Carrying Data}

A conceptually related technique to Proof-Carrying Code is Proof-Carrying
Data\cite{ct10}: in a distributed computation setting, PCD allows
messages to be accompanied by proofs that said messages and the history
leading to them follow a compliance predicate, in such a way that
verifiers can be convinced that the compliance predicate held throughout
the computation, even in the presence of malicious parties. PCC and
PCD are complimentary since PCD can enforce properties expressed via
PCC: interestingly, PCD could enable zero-knowledge privacy for PCC\cite{EPRINT:ChoTroVau13},
but at a greater efficiency cost.

\section{Private and Verifiable Smart Contracts}

In order to merge the Private smart contracts (see Section \ref{sec:Private-Smart-Contracts})
and the Verifiable smarts contracts (see Section \ref{sec:Verifiable-Smart-Contracts}),
the core of the present work has been re-developed and this will be
the version that is going to be open-sourced:
\begin{itemize}
\item private smart contracts are developed in Obliv-Java, a modified Open-JDK
for secure computation analogous to Obliv-C\cite{cryptoeprint:2015:1153}:
variables are annotated with ``\textit{@Obliv(partyId)}'' and all
the secure computation is executed transparently to the developer.
In comparison with ObliVM\cite{oblivm}, it's faster, more secure
and faithfully respects the original Java language standard and bytecode
semantics, thus it's fully compatible with all the Java ecosystem
and libraries.
\item private smart contract can be annotated with JML annotations for verification
purposes, and said verifications proved and the results discharged
in the form of proof-carrying code.
\end{itemize}

\section{Functionalities and Protocols\label{sec:Functionalities-and-Protocols}}

\subsection{General Overview}

\begin{figure}[H]
\centering{}\includegraphics[scale=0.65]{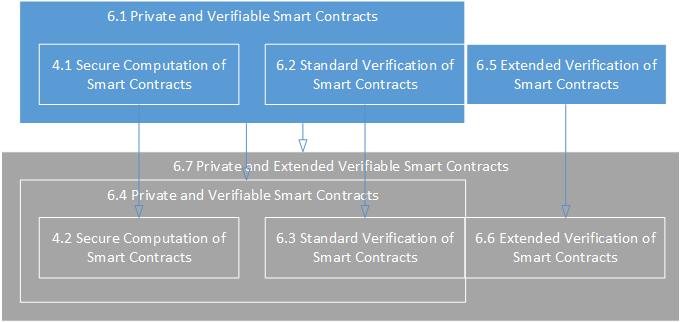}\protect\caption{General overview of the functionalities (top) and realised protocols
(bottom) described in this paper}
\end{figure}

\subsection{Detailed Description}

In this section we present our secure protocol for private and verifiable
smart contracts, specified in the Functionality below:

\fbox{\begin{minipage}[t]{1\columnwidth}%
\textbf{Functionality 6.1: Private and Verifiable Smart Contracts}
\begin{itemize}
\item \textbf{Parties:} $E_{1},...,E_{N}$, set of nodes $N_{i}$ of a blockchain
$B$
\item \textbf{Inputs:} smart contract $SC$, private inputs from parties
$E_{1}:\overrightarrow{x},E_{2}:\overrightarrow{y},...,E_{N}:\overrightarrow{z}$
\item \textbf{The functionality:}

\begin{enumerate}
\item Parties $E_{1},...,E_{N}$ verify the smart contract $SC$
\item Parties $E_{1},...,E_{N}$ securely execute the smart contract $SC$
on nodes $N_{i}$
\end{enumerate}
\item \textbf{Outputs:} results from the secure computation $E_{1}:\overrightarrow{r_{1}},E_{2}:\overrightarrow{r_{2}},...,E_{N}:\overrightarrow{r_{N}}$.\end{itemize}
\end{minipage}}\\

Our protocols for realising Functionality 6.1 consists of a standard
verification functionality and a secure computation functionality
(Functionality 4.1):\\

\fbox{\begin{minipage}[t]{1\columnwidth}%
\textbf{Functionality 6.2: Standard verification of smart contracts}
\begin{itemize}
\item \textbf{Parties:} $E_{1},...,E_{N}$
\item \textbf{Inputs:} smart contract $SC$
\item \textbf{The functionality:}

\begin{enumerate}
\item Parties $E_{1},...,E_{N}$ obtain the annotations/proofs/certificates
of smart contract $SC$
\item Verify the annotations, proofs and/or certificates available on the
smart contract $SC$
\item Check if the verified annotations, proofs and/or certificates are
sufficient for the smart contract $SC$ to be declared correct, according
to the local policies of each party
\end{enumerate}
\item \textbf{Output:} \textit{true} if the verification was correct, \textit{false}
otherwise.\end{itemize}
\end{minipage}}

To implement Functionality 6.2 (Standard verification of smart contracts),
the following protocol is used:\\

\fbox{\begin{minipage}[t]{1\columnwidth}%
\textbf{Protocol 6.3: Realising Functionality 6.2 (Standard verification
of smart contracts)}
\begin{itemize}
\item \textbf{Parties:} $E_{1},...,E_{N}$
\item \textbf{Inputs:} smart contract $SC$
\item \textbf{The protocol:}

\begin{enumerate}
\item Parties $E_{1},...,E_{N}$ download the annotations/proofs/certificates
of smart contract $SC$ and check their digital signatures.

\begin{enumerate}
\item If annotated smart contracts do not contain any kind of proofs or
certificates, the annotations of the smart contract are automatically
tested using an heuristic/concolic execution engine.
\item If annotated smart contracts contain full/partial proofs, they are
locally regenerated and compared to the embedded ones.
\item If annotated smart contracts contain certifiable certificates, said
certificates are checked.
\end{enumerate}
\item This step could be computationally expensive, so it only needs to
be done the first time if the smart contract $SC$ is not modified.
\item Check if the verified annotations, proofs and/or certificates are
sufficient for the smart contract $SC$ to be declared correct, according
to the local policies of each party
\end{enumerate}
\item \textbf{Output:} \textit{true} if the verification was correct, \textit{false}
otherwise.\end{itemize}
\end{minipage}}\\

The previous protocols are combined in the following general execution
protocol for private and verifiable smart contracts:\\

\fbox{\begin{minipage}[t]{1\columnwidth}%
\textbf{Protocol 6.4: Realising Functionality 6.1 (Private and Verifiable
Smart Contracts)}
\begin{itemize}
\item \textbf{Parties:} $E_{1},E_{2},...,E_{N}$, set of nodes $N_{i}$
of a blockchain $B$
\item \textbf{Inputs:} smart contract $SC$, private inputs from parties
$E_{1}:\overrightarrow{x},E_{2}:\overrightarrow{y},...,E_{N}:\overrightarrow{z}$
\item \textbf{The protocol:}

\begin{enumerate}
\item The parties invoke Functionality 6.2 (Standard verification of smart
contracts), implemented using Protocol 6.3, and obtain the verification
status of the smart contract $SC$.
\item The parties invoke Functionality 4.1 (Secure computation of smart
contracts), implemented using Protocol 4.2 where party $E_{1}$ inputs
$\overrightarrow{x}$, party $E_{2}$ inputs $\overrightarrow{y}$
and the rest of parties $E_{N}$ their corresponding inputs. Parties
receive results $E_{1}:\overrightarrow{r_{1}}$, $E_{2}:\overrightarrow{r_{2}}$
and the rest of parties their corresponding outputs $E_{N}:\overrightarrow{r_{N}}$.
\end{enumerate}
\item \textbf{Outputs:} results from the secure computation $E_{1}:\overrightarrow{r_{1}},E_{2}:\overrightarrow{r_{2}},...,E_{N}:\overrightarrow{r_{N}}$.\end{itemize}
\end{minipage}}

\subsection{Security Analysis\label{sec:Security-Analysis}}

The security analysis follows the standard definition of static semi-honest
security in the standalone setting\cite{Goldreich04} and assumes
semi-honest security of basic building blocks: Yao's protocol\cite{cryptoeprint:2004:175}
and oblivious transfer accelerated with OT-extension\cite{EPRINT:ALSZ13,Ishai03extendingoblivious},
implying the security for the Protocol 4.2 realising Functionality
4.1. Additionally, ORAMs\cite{Goldreich:1987:TTS:28395.28416} could
also be used to speed up dynamic memory accesses: also note that when
the number of parties is huge, communication locality plays a central
role and may be better achieved with ORAMs\cite{EPRINT:BoyChuPas14,EPRINT:LuOst15}.
The following theorem proves the general protocol 6.3:
\begin{thm}
(General Protocol). Protocol 6.4 realises Functionality 6.1 against
static corruptions in the semi-honest security model augmented with
verifiability of the code.\end{thm}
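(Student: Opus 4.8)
The plan is to follow the standard simulation paradigm for static semi-honest security in the standalone setting, as laid out in Goldreich's framework, and to build the simulator for Protocol 6.4 by composing the simulators for its two constituent sub-protocols. First I would observe that Protocol 6.4 is a sequential composition of two independent phases: the verification phase realised by Protocol 6.3 (implementing Functionality 6.2) and the secure-computation phase realised by Protocol 4.2 (implementing Functionality 4.1). By the sequential modular composition theorem for semi-honest security, it suffices to argue security of each phase separately in a hybrid model where the other phase is replaced by an ideal call to its functionality, and then invoke the composition theorem to conclude that the real execution of Protocol 6.4 is indistinguishable from an ideal execution of Functionality 6.1.

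For the verification phase, the key point is that Protocol 6.3 operates only on the smart contract $SC$ and its publicly downloadable annotations, proofs and certificates — there are no private inputs involved, and each party performs a purely local computation (concolic testing, proof regeneration, or certificate checking) followed by a local policy check. Hence the simulator for a corrupted subset of parties $\{E_i\}_{i\in I}$ simply runs the honest verification procedure on the public data, which it can obtain itself, and outputs the resulting \emph{true}/\emph{false} verdict; this trivially matches the real view because the computation is deterministic and input-independent. The only subtlety worth spelling out is that ``verifiability of the code'' is modelled as an additional guarantee layered on top of the functionality (the theorem statement explicitly says the semi-honest model is \emph{augmented} with verifiability), so I would state precisely that the verification functionality is realised with perfect security and that its output is made available to the environment before the second phase begins.

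For the secure-computation phase, I would invoke Theorem stated earlier (the security of Protocol 4.2 realising Functionality 4.1), which itself rests on the assumed semi-honest security of Yao's garbled-circuit protocol and of OT-extension-accelerated oblivious transfer, together with secure channels between parties $E_i$ and nodes $N_i$; ORAM may additionally be used for dynamic memory accesses without affecting the argument. Composing the two simulators, the overall simulator first produces the verification verdict from public data, then — conditioned on that verdict — runs the Protocol 4.2 simulator on the corrupted parties' inputs $\{\overrightarrow{x},\overrightarrow{y},\dots\}_{i\in I}$ and the outputs $\{\overrightarrow{r_i}\}_{i\in I}$ returned by Functionality 4.1. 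Indistinguishability of the joint view follows by a standard hybrid argument: one hybrid step replaces the real verification transcript with the simulated one (perfect), and a second replaces the real secure-computation transcript with the simulated one (computationally indistinguishable by the cited theorem).

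The main obstacle, and the step I would spend the most care on, is justifying that the composition is clean despite the two phases sharing the object $SC$ and despite the caveat in Protocol 6.3 that step~2 ``only needs to be done the first time if $SC$ is not modified'' — i.e. I must argue that the verification verdict is a deterministic function of public information and therefore introduces no correlation with private inputs that the simulator cannot reproduce, and that caching across executions does not create a cross-session channel in the standalone model. A secondary point requiring attention is making the informal phrase ``augmented with verifiability of the code'' rigorous: I would pin it down by defining the augmented ideal functionality to additionally output the verification status, and noting that since both real and ideal executions expose exactly this status, the augmentation is satisfied by construction. Once these points are nailed down, the remainder is a routine invocation of the sequential composition theorem and the previously established security of Protocol 4.2.
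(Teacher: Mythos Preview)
Your proposal is correct and follows essentially the same approach as the paper: decompose Protocol~6.4 into its verification phase (Protocol~6.3 realising Functionality~6.2) and its secure-computation phase (Protocol~4.2 realising Functionality~4.1), and reduce security to that of the underlying primitives (Yao's protocol, OT-extension, and optionally multi-party/ORAM variants). Your write-up is in fact considerably more detailed than the paper's own proof, which is little more than a list of citations to the building blocks plus the remark that verifiability follows from Protocol~6.3; your explicit invocation of the sequential modular composition theorem and the care you take with the ``augmented with verifiability'' clause and the caching caveat are refinements the paper does not spell out.
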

\begin{proof}
Correctness and privacy of Protocol 4.2 (1.b) follows from Yao's protocol\cite{cryptoeprint:2004:175}
and oblivious transfer accelerated with OT-extension\cite{EPRINT:ALSZ13,Ishai03extendingoblivious};
in case of a multi-party setting, correctness and privacy follows
from one of the multi-party protocols (GMW\cite{gmw1987}, BMR\cite{bmr},
BGW\cite{BGW88,EPRINT:AshLin11}); when using ORAMs, the security
proof comes from the specific security proof of the chosen ORAM scheme
(Circuit ORAM\cite{EPRINT:WanChaShi14}; Square-Root ORAM\cite{revisitingSqORAM,originalORAM});
when using TLS, we make use of the composability of the security proofs
of said protocols\cite{EPRINT:GMPSS08,EPRINT:KMOTV14,EPRINT:BFKPSZ14}.
Regarding verifiability, it follows from Protocol 6.3 realising Functionality
6.2.
\end{proof}

\subsection{Extended verification of smart contracts}

Trusted third parties (e.g., governments, central banks, regulating
bodies) may provide specifications against which proofs must be generated.
Thus, Gyges attacks\cite{EPRINT:JueKosShi16} can be prevented if
parties and executing nodes require that all the executed smart contracts
must adhere to said specifications and be digitally signed by trusted
third parties: that is, prevent them from malicious hackers and others
engaged in unlawful business practices while being compliant with
certain regulations and laws (e.g., anti-money laundering laws). Comparing
the next two diagrams provides a better understanding of the introduced
differences:

\begin{figure}[H]
\centering{}\includegraphics[scale=0.65]{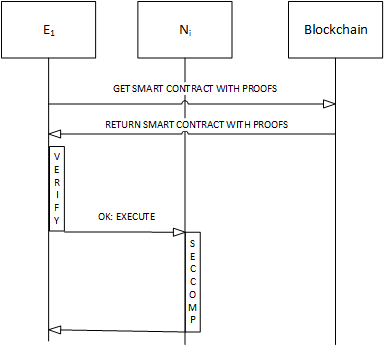}\protect\caption{Standard verification of smart contracts}
\end{figure}

\begin{figure}[H]
\centering{}\includegraphics[scale=0.65]{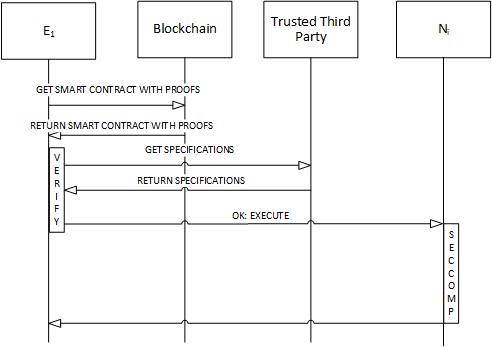}\protect\caption{Extended verification of smart contracts}
\end{figure}

To formalize this concept, the following Functionality 6.5 is introduced:

\fbox{\begin{minipage}[t]{1\columnwidth}%
\textbf{Functionality 6.5: Extended verification of smart contracts}
\begin{itemize}
\item \textbf{Parties:} $E_{1},E_{2},...,E_{N}$, trusted third parties
$T_{i}$
\item \textbf{Inputs:} smart contract $SC$
\item \textbf{The functionality:}

\begin{enumerate}
\item Parties $E_{1},E_{2},...,E_{N}$ obtain the annotations/proofs/certificates
of smart contract $SC$
\item Check digital signatures from trusted third parties $T_{i}$ and verify
conformance to their specifications: verify annotations, proofs and/or
certificates available on the smart contract $SC$
\item Check if the verified annotations, proofs and/or certificates are
sufficient for the smart contract $SC$ to be declared correct, according
to the local policies of each party
\end{enumerate}
\item \textbf{Output:} \textit{true} if the verification was correct, \textit{false}
otherwise.\end{itemize}
\end{minipage}}\\

The following Protocol 6.6 implements Functionality 6.5:\\

\fbox{\begin{minipage}[t]{1\columnwidth}%
\textbf{Protocol 6.6: Realising Functionality 6.5 (Extended verification
of smart contracts)}
\begin{itemize}
\item \textbf{Parties:} $E_{1},E_{2},...,E_{N}$, trusted third parties
$T_{i}$
\item \textbf{Inputs:} smart contract $SC$
\item \textbf{The protocol:}

\begin{enumerate}
\item Parties $E_{1},E_{2},...,E_{N}$ download the annotations/proofs/certificates
of smart contract $SC$ and check their digital signatures. Digital
signatures from trusted third parties $T_{i}$ are checked and conformance
to specifications from said third parties $T_{i}$ is tested: every
party $E_{i}$ has a security profile that specifies the mandatory/optional
trusted signing parties and specifications that every smart contract
must adhere to.

\begin{enumerate}
\item If annotated smart contracts do not contain any kind of proofs or
certificates, the annotations of the smart contract are automatically
tested using an heuristic/concolic execution engine. At least, it
should be digitally signed by a trusted third party.
\item If annotated smart contracts contain full/partial proofs, they are
locally regenerated and compared to the embedded ones. Some of said
proofs must conform to specifications from third parties.
\item If annotated smart contracts contain certifiable certificates, said
certificates are checked. Some of said certifiable certificates must
conform to specifications from third parties.
\end{enumerate}
\item This step could be computationally expensive, so it only needs to
be done the first time if the smart contract $SC$ is not modified.
\item Check if the verified annotations, proofs and/or certificates are
sufficient for the smart contract $SC$ to be declared correct, according
to the local policies of each party.
\end{enumerate}
\item \textbf{Output:} \textit{true} if the verification was correct, \textit{false}
otherwise.\end{itemize}
\end{minipage}}\\

Finally, the general Protocol 6.4 is reviewed with the new Protocol
6.6 for the extended verification of smart contracts:\\

\fbox{\begin{minipage}[t]{1\columnwidth}%
\textbf{Protocol 6.7: Realising Functionality 6.1 (Private and Extended
Verifiable Smart Contracts)}
\begin{itemize}
\item \textbf{Parties:} $E_{1},E_{2},...,E_{N}$, set of nodes $N_{i}$
of a blockchain $B$, trusted third parties $T_{i}$
\item \textbf{Inputs:} smart contract $SC$, private inputs from parties
$E_{1}:\overrightarrow{x},E_{2}:\overrightarrow{y},...,E_{N}:\overrightarrow{z}$
\item \textbf{The protocol:}

\begin{enumerate}
\item The parties invoke Functionality 6.5 (Extended verification of smart
contracts), implemented using Protocol 6.6, and obtain the extended
verification status of the smart contract $SC$.
\item The parties invoke Functionality 4.1 (Secure computation of smart
contracts), implemented using Protocol 4.2 where party $E_{1}$ inputs
$\overrightarrow{x}$, party $E_{2}$ inputs $\overrightarrow{y}$
and the rest of parties $E_{N}$ their corresponding inputs. Parties
receive results $E_{1}:\overrightarrow{r_{1}},E_{2}:\overrightarrow{r_{2}},...,E_{N}:\overrightarrow{r_{N}}$.
\end{enumerate}
\item \textbf{Outputs:} results from the secure computation $E_{1}:\overrightarrow{r_{1}},E_{2}:\overrightarrow{r_{2}},...,E_{N}:\overrightarrow{r_{N}}$.\end{itemize}
\end{minipage}}\\

The following theorem finally proves the extended general protocol
6.7:
\begin{thm}
(Extended General Protocol). Protocol 6.7 realises Functionality 6.1
against static corruptions in the semi-honest security model augmented
with extended verifiability of the code.\end{thm}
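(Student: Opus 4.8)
The plan is to mirror the argument used for Theorem~6.4 (General Protocol), now building on the extended verification primitive rather than the standard one. Protocol~6.7 is the sequential composition of two sub-protocols: first an invocation of Functionality~6.5 (Extended verification), realised by Protocol~6.6, and then an invocation of Functionality~4.1 (Secure computation), realised by Protocol~4.2. Since the output of the first phase is a public bit (\emph{true}/\emph{false}) and carries no private inputs, the two phases interact only through this bit and the smart contract $SC$; the composition is therefore amenable to a standard modular/hybrid argument in the standalone semi-honest model of \cite{Goldreich04}. First I would fix an adversary $\mathcal{A}$ statically corrupting a subset of the parties $E_1,\dots,E_N$ (and, in the secure-computation phase, possibly a subset of the nodes $N_i$, subject to the restricted-collusion constraint inherited from Section~\ref{sec:Private-Smart-Contracts}) and construct a simulator $\mathcal{S}$ in the $(\mathcal{F}_{6.5},\mathcal{F}_{4.1})$-hybrid world.

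The construction of $\mathcal{S}$ proceeds phase by phase. For the extended-verification phase, the simulator is trivial: Protocol~6.6 consists only of each party locally downloading annotations/proofs/certificates of the public contract $SC$, checking digital signatures of the trusted third parties $T_i$ against its own security profile, and regenerating/validating proofs or certifiable certificates — all operations on public data performed with no interaction among the $E_i$ beyond what is already public. Hence $\mathcal{S}$ simply runs the honest verification code on behalf of the corrupted parties and forwards the resulting bit; the simulated view is identical to the real view, and in particular the Gyges-style conformance checks add no new messages to simulate. For the secure-computation phase, I would invoke Theorem~6.4 (equivalently, the security of Protocol~4.2 realising Functionality~4.1), which is itself reduced to semi-honest security of Yao's protocol \cite{cryptoeprint:2004:175} with OT-extension \cite{EPRINT:ALSZ13,Ishai03extendingoblivious} (and, in the multi-party case, GMW/BMR/BGW \cite{gmw1987,bmr,BGW88,EPRINT:AshLin11}), under the stated garbling-scheme guarantees of privacy, obliviousness and correctness; the associated simulator $\mathcal{S}_{4.1}$ is used as a sub-routine, fed the outputs $\overrightarrow{r_i}$ that $\mathcal{F}_{4.1}$ returns to the corrupted parties. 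Stitching the two simulators together and appealing to the sequential modular composition theorem of \cite{Goldreich04} yields indistinguishability of the hybrid-world execution from the real-world execution; replacing the two ideal calls by their realising protocols (Protocol~6.6 and Protocol~4.2) then gives the claim.

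The one genuinely new ingredient relative to Theorem~6.4 is the \emph{augmentation} of the functionality with extended verifiability, so the statement to be discharged carefully is that Protocol~6.6 securely realises Functionality~6.5 — the analogue, for extended verification, of the last sentence of the proof of Theorem~6.4. Here the main obstacle is conceptual rather than computational: one must argue that the additional trusted-third-party conformance checks (signature verification against the $T_i$ and specification-adherence tests) leak nothing beyond the public contract and the public verification verdict, i.e. that every message in Protocol~6.6 is a deterministic function of public inputs, so the simulator can reproduce it exactly. Given that $SC$, the signatures, and the $T_i$-specifications are all public, and each party's security profile is local input it already holds, this is immediate, and the simulation error for the whole of Protocol~6.7 is dominated by the soundness error of the underlying garbling scheme and the negligible OT-extension error — matching the bound implicit in Theorem~6.4. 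I would then remark that, exactly as in Theorem~6.4, ORAM-based speedups \cite{EPRINT:WanChaShi14,revisitingSqORAM,originalORAM} and TLS channels \cite{EPRINT:GMPSS08,EPRINT:KMOTV14,EPRINT:BFKPSZ14} compose in by their own security proofs without affecting the argument.
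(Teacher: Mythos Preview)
Your proposal is correct and follows the same approach as the paper: reduce the secure-computation phase to the semi-honest security of Yao/OT-extension (or GMW/BMR/BGW in the multi-party case), note that ORAM and TLS compose in by their own proofs, and handle the extended-verifiability clause by appealing to Protocol~6.6 realising Functionality~6.5. Your treatment is in fact considerably more explicit than the paper's --- you spell out the simulator construction, the public-data-only nature of the verification phase, and the sequential modular composition argument from \cite{Goldreich04} --- whereas the paper's proof is a one-paragraph sketch that merely cites the relevant primitives; but the underlying decomposition and the key reductions are identical.
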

\begin{proof}
Correctness and privacy of Protocol 4.2 (1.b) follows from Yao's protocol\cite{cryptoeprint:2004:175}
and oblivious transfer accelerated with OT-extension\cite{EPRINT:ALSZ13,Ishai03extendingoblivious};
in case of a multi-party setting, correctness and privacy follows
from one of the multi-party protocols (GMW\cite{gmw1987}, BMR\cite{bmr},
BGW\cite{BGW88,EPRINT:AshLin11}); when using ORAMs, the security
proof comes from the specific security proof of the chosen ORAM scheme
(Circuit ORAM\cite{EPRINT:WanChaShi14}; Square-Root ORAM\cite{revisitingSqORAM,originalORAM});
when using TLS, we make use of the composability of the security proofs
of said protocols\cite{EPRINT:GMPSS08,EPRINT:KMOTV14,EPRINT:BFKPSZ14}.
Regarding verifiability against Gyges attacks\cite{EPRINT:JueKosShi16},
it follows from Protocol 6.6 realising Functionality 6.5.
\end{proof}

\subsection{Malicious security}

Smart contracts are compiled to efficient Boolean circuits that could
be executed on multiple secure computation frameworks with the added
benefit of obtaining malicious security\cite{cryptoeprint:2016:762,cryptoeprint:2017:030,cryptoeprint:2017:189,emp-toolkit,cryptoeprint:2017:214}:
protocols and theorems above can trivially be updated with their security
definitions and proofs. In case of using ORAMs, security against malicious
adversaries comes from specialized versions of said protocols \cite{EPRINT:AHMR14,EPRINT:Miao16,EPRINT:HazYan16}.

\subsection{Private Function Evaluation}

In this setting, the smart contract itself is kept secret: that is,
only one of the parties knows the function $f\left(x\right)$ computed
by the smart contract, whereas other parties provide the input to
the private function without learning about $f$ besides the size
of the circuit defining the function and the number of inputs and
outputs. Private function evaluation can be implemented using secure
function evaluation\cite{af90,practicalUC} by securely evaluating
a Universal Circuit\cite{Valiant:1976:UC:800113.803649} that is programmed
by the party knowing the function $f\left(x\right)$ to evaluate it
on the other parties' inputs: the security follows from that of the
secure function evaluation protocol that is used to evaluate the Universal
Circuit.

Recently, Valiant's Universal Circuit has been implemented and proven
practical\cite{EPRINT:KisSch16,cryptoeprint:2017:798}: its optimal
size was proven to be $\Omega\left(k\log k\right)$\cite{Valiant:1976:UC:800113.803649}
and its recent implementation further improved it by (at least) $2k$.
Although it's a considerable blowup in size, late MPC protocols provide
very efficient pre-processing phases that could ameliorate execution
times: online times are only 1-4\% of the total execution time in
the WAN setting\cite{cryptoeprint:2017:189,emp-toolkit,cryptoeprint:2016:1066},
and 2-5\% in the LAN setting\cite{cryptoeprint:2017:214,improvementsBMR}.
Since the generated universal circuit description $UC_{u,v,k^{*}}$
is public to all parties\cite{EPRINT:KisSch16,cryptoeprint:2017:798},
both function-dependent and function-independent pre-processing\cite{cryptoeprint:2017:189,emp-toolkit,cryptoeprint:2016:1066,cryptoeprint:2017:214,cryptoeprint:2017:862,improvementsBMR}
could be used in the setting of private function evaluation for maximum
speedup.

\section{Economic Impact and Legal Analysis}

Previous legal analysis\cite{practiceLaw} has considered how secure
computation fits current legal frameworks, although it's much more
interesting to evaluate how it could alter legal frameworks and its
economic impact.

\subsection{Data Privacy as Quasi-Property}

Quasi-property interests\cite{quasiProperty} refer to situations
in which the law seeks to simulate the functioning of property’s exclusionary
apparatus, through a relational entitlement mechanism, by focusing
on the nature and circumstances of the interaction in question, which
is thought to merit a highly circumscribed form of exclusion. The
term first appeared in a SCOTUS decision in International News Service
v. Associated Press\cite{INSvsAP}, in which Justice Pitney recognized
the right of an information gatherer to prevent a competitor from
free riding on the original gatherer’s labor for a limited period
of time: more specifically, it's limited in that it would only ever
exist between the two parties in question and never in the abstract
against the world at large.

Quasi-property allows to effectively treat data privacy as a property
right\cite{privacyAsQuasiProperty} and it's the closest analogue
in American law that grants individuals a property right in their
personal data, as privacy has always been considered in the privacy-preserving
literature in cryptography: in fact, the quasi-property view of privacy-preserving
computation has stronger grounding than European database rights\cite{rightsInData}.
This approach provides a common framework that underlies all of the
privacy torts, while avoiding the need to define privacy in such a
way that it describes every injury that the law recognizes as an invasion
of privacy and that can be generally categorized in four harms: (1)
information collection; (2) information processing; (3) information
dissemination, and (4) invasion.

A general discussion of the interpretation of privacy as property
from other points of view could be found in \cite{privacyAsIntellectualProperty}.

\subsection{A Solution to Arrow's Paradox}

Arrow's Paradox\cite{RePEc:nbr:nberch:2144} states that ``there
is a fundamental paradox in the determination of demand for information;
its value for the purchaser is not known until he knows the information,
but then he has in effect acquired it without cost''. \textit{Ex-ante},
the purchaser cannot value the original information since it can only
be known after it has been revealed; \textit{ex-post}, the purchaser
could not compensate the seller and disseminate it for free. Due to
the inherent properties of information (non-excludable, non-rivalrous),
markets for information cannot exist in the absence of intellectual
property rights\cite{RePEc:eee:respol:v:32:y:2003:i:2:p:333-350}
because the original producer/inventor of any information loses the
monopoly on it after the information is revealed; regarding financial
information, the finance literature generally agrees in that the troubles
in informational trading explain financial intermediation\cite{RePEc:bla:jfinan:v:32:y:1977:i:2:p:371-87,RePEc:oup:restud:v:51:y:1984:i:3:p:415-432.,RePEc:eee:jfinin:v:1:y:1990:i:1:p:3-30,RePEc:ecm:emetrp:v:58:y:1990:i:4:p:901-28}.

Secure computation techniques offer a practical solution to Arrow's
Paradox: distrustful third-parties can compute on private information
without disclosing it, allowing its valuation while preventing intellectual
property theft; this is significantly better than previously known
methods based on partial the revelation of information\cite{RePEc:aea:aecrev:v:84:y:1994:i:1:p:190-209,RePEc:oup:restud:v:69:y:2002:i:3:p:513-531,RePEc:tpr:jeurec:v:3:y:2005:i:2-3:p:745-754}.
More generally, secure computation techniques make information excludable
while maintaining its non-rivalrousness: that is, the original producer/inventor
of the information retains market power over it while maintaining
its costless resale (i.e., the marginal cost of an additional digital
copy is zero), theoretically allowing for digital goods of infinite
value that bypass Coase conjecture\cite{RePEc:ucp:jlawec:v:15:y:1972:i:1:p:143-49}
since it's also possible to prevent that first time purchasers resell
the information they have queried/acquired\cite{RePEc:eee:gamebe:v:2:y:1990:i:4:p:337-361,RePEc:eee:gamebe:v:3:y:1991:i:3:p:339-349},
although in practice having to account for the costs of the slowdown
introduced by secure computation techniques.

\subsection{Expansion of Trade Secrecy}

While software copyright and patentability protection have been weakened,
trade secrecy stands firm. As defined in the United States' Uniform
Trade Secrets Act\cite{utsa}: ``Trade secret means information,
including a formula, pattern, compilation, program, device, method,
technique, or process, that: (i) derives independent economic value,
actual or potential, from not being generally known to, and not being
readily ascertainable by proper means by, other persons who can obtain
economic value from its disclosure or use, and (ii) is the subject
of efforts that are reasonable under the circumstances to maintain
its secrecy''. The definition on the European Union's Trade Secrets
Directive\cite{utsa} implicitly includes computer programs, financial
innovations, lists of customers, business statistics and many other
types of secret information.

There is no definitional uncertainty in trade secret law: unlike copyright,
there is no exclusion for functionality; unlike patents, there is
no exclusion for abstract ideas\cite{hiddenPlainSight}. There isn't
any uncertainty about the exact contours of their extension either,
such as in the debate of the mutual exclusivity of copyright and patents
protections\cite{copyrightPatentOverlap}. In some cases, widely distributed
software may remain a trade secret if the license agreement requires
confidentiality and return upon non-use (see Data Gen. Corp. v. Grumman
Systems Support Corp.\cite{DGCvsGSSC}).

Although software might be reverse engineered (i.e., an acceptable
way to discover a trade secret that prevents its general use for protecting
software), secure computation techniques effectively hinder and/or
completely forbid reverse engineering. Trade secrets can be justified
as a form, not of traditional property, but of intellectual property
in which secrecy is central\cite{virtuesTradeSecrecy} and may serve
the purposes of IP law better than more traditional IP rights. Stronger
trade secrecy law increases R\&D in high technology industries\cite{innovationTradeSecrecy},
reduces patenting\cite{patentsTradeSecrecy} and increases trade flows\cite{tradeSecrecyEconomics}:
similar effects are expected from the utilisation of the technologies
described in the present publication.

\subsection{Verifiability and Self-Enforcement}

The term \textit{Lex Cryptographia}\cite{lexCryptographia} designates
a new body/subset of law, containing rules administered through smart
contracts and decentralized autonomous organizations: this concept
is inspired on \textit{Lex Mercatoria}, an old subset of customs that
became recognized as a customary body of law for international commerce.
Smart contracts are just software programs with very specialized functionalities
intended to replace paper contracts: the practice of law could thus
follow the path of software, with smart contract programming languages
becoming more powerful and easier to develop, transforming the legal
profession with more technical lawyers. By design, smart contract
cannot be breached: once contracting parties have agreed to be bound
by a particular clause, the code's immutability binds them to that
clause without leaving them the possibility of a breach; that is,
the code defines its own interpretation and enforces the defined rules
contained on it without the need of third parties (i.e., self-enforcement).
The only way to escape from contractual obligations that the parties
no longer want to honor is by including legal provisions into smart
contract's code. Over time, law and code may converge, so that infringing
the law will be effectively breaking the code: that is, \textit{Lex
Cryptographia} is stronger than \textit{Lex Posita} and will become
more prevalent.

The strictness of self-enforcement has been much criticized\cite{levy2017book,ohara2017smart}:
it could be misused and turned against the contracting parties and
it doesn't represent the realities of real-world enforcement of contracts.
Adding verifiability to smart contracts as proposed in this work prevents
all these problems as it allows to check conformity to the specifications
from third parties (e.g., governments, regulating bodies, standards).
In sum, contracting parties are private lawmakers\cite{computableContracts}
in control of both the substance and the form of their contractual
obligations (i.e., \textit{pacta sunt servanda}), but third parties
could intervene to regulate said private agreements by redacting specifications
that smart contracts must formally verify against.

\subsection{Markets for Smart Contracts\label{sub:Markets-for-Smart-Contracts}}

Paradoxically, smart contracts are hardly bought/sold: the inexistence
of strong property rights hampers the development of markets on smart
contracts; actually, not even companies developing and operating them
are being acquired/merged based on the value of their smart contracts
since they can easily be reverse engineered and cloned.

Private smart contracts provide strong property rights based on cryptographic
techniques which allow for the emergence of markets to trade them.
Depending on the cryptographic techniques employed, the following
advantageous situations could be considered:
\begin{enumerate}
\item The algorithms contained within smart contracts could be traded without
disclosing their details when using private function evaluation, homomorphic
encryption and/or indistinguishability obfuscation.
\item More practically, when using garbled circuits or secret sharing techniques
the source of value will reside on encrypted data processed by the
algorithms of private smart contracts, these cryptographic techniques
being much more efficient than homomorphic encryption/IO; said encrypted
data could be stored on the blockchain, easing the transfer of property
of the smart contract.
\end{enumerate}
Traditionally, binary compilation and obfuscation have been used to
safeguard the value of software: secure computation techniques offer
a provably-secure way to protect the value of even purely open-source
software. Furthermore, the ability to safely trade private smart contracts
will justify their higher development costs.

\subsection{Effects on Currency Competition}

Latest analysis on the monetary policy of cryptocurrencies\cite{currencyCompetition}
provide insights on the effects of currency competition:
\begin{itemize}
\item Monetary equilibrium between private cryptocurrencies will not deliver
price stability: profit-maximizing entrepreneurs issuing cryptocurrencies
do not have real incentives to provide stable currencies, only to
maximise their seigniorage.
\item Monetary systems consisting of only private cryptocurrencies in the
equilibrium with stable prices do not provide the socially optimum
quantity of money: competition between cryptocurrencies is not enough
to provide optimal outcomes since entrepreneurs do not internalise
the pecuniary externalities by minting additional tokens.
\item Unlike private money, government money has fiscal backing because
it can tax agents in the economy. But in competition with people willing
to hold cryptocurrencies, the implementation of monetary policy in
deflationary settings will be significantly impaired since profit-maximizing
entrepreneurs will be unwilling to retire their private currencies
and instead choose to increase their issued money.
\item Government money could co-exist without intervention in a unique equilibrium
with cryptocurrencies if the minted cryptocurrency growths following
a predetermined algorithm (e.g., Bitcoin) and the proceedings are
used to buy/finance sufficiently productive capital. 
\end{itemize}
Additionally, social efficiency may also be achieved with different
cryptocurrencies featuring diverse functionalities that provide market
power to their issuers and users: particularly, this is the case of
the private and verifiable smart contracts of this paper, since they
could be used to provide natural monopolies on the encrypted programs
stored on them.

\subsection{Token vs. Account-based Cryptocurrencies}

Most cryptocurrencies are based on the model of issuing and transacting
tokens, using some form of distributed ledger to keep track of the
ownership of said tokens (e.g., Bitcoin, Ethereum and Zcash): they're
the digital equivalents of cash.

Another unexplored model of cryptocurrency is that of holding funds
in accounts at the central bank or in depository institutions, resembling
debit cards: in fact, users of Digital Currency Exchanges maintain
accounts holding substantial amounts of wealth, but in the form of
token-based cryptocurrencies. There are many efficiency gains to be
expected from account-based cryptocurrencies: latest macroeconomic
models\cite{RePEc:boe:boeewp:0605} show that they would permanently
raise GDP by as much a 3\% in the USA due to lower bank funding costs,
lower monetary transaction costs and lower distortionary taxes; additionally,
they would introduce new tools for the central bank to stabilise the
business cycle.

The technology required to implement account-based cryptocurrencies
is different and more complex than the required for token-based cryptocurrencies:
the smart contracts proposed in this paper are a perfect fit for this
task, due to their ability to maintain privacy and guarantee their
perfect functionality through formal verification techniques.

\subsection{Impact on Market Structures}

Although disintermediation will be one of the first consequences of
the application of cryptographic smart contracts to financial markets,
as happened in the past with the introduction of the Internet\cite{RePEc:wop:pennin:00-35},
bank and fund concentrations may also increase: information asymmetry
is associated with more concentration\cite{JOFI:JOFI1219} and information
production and its hiding is a valuable activity of banks, that is,
opacity has value in itself\cite{RePEc:aea:aecrev:v:107:y:2017:i:4:p:1005-29}.
\textit{Ceteris paribus}, it's difficult to estimate the resulting
equilibria of the impact of this technology on the level of concentration
in the financial industry, given the high number of inter-related
variables; nonetheless it's easier to predict that practices like
shadow banking\cite{shadowBanking} will increase, as new financial
technology accounts for 35\% of their growth\cite{shadowBankIncrease}.

\subsection{The Valuation of Secrecy and the Privacy Multiplier}

The use of private smart contracts (see Section \ref{sec:Private-Smart-Contracts})
has inmediate economic benefits for the secured information, because
when information is withholden to prevent data leakages then its valuation
is discounted. This secrecy discount is given by

\[
S_{T}=e^{-yT}\left(2\varPhi\left(\frac{\sigma\sqrt{T}}{2}\right)-1\right)
\]
with $T$ denoting the time to maturity, $\varPhi$ the cumulative
normal distribution of mean 0 and standard deviation 1, $y$ is the
constant continuously compounded dividend yield and $\sigma$ the
constant volatility of the price of the secret data. However, private
smart contracts\ref{sec:Private-Smart-Contracts} not only allow the
private appropriation of the secrecy discount, they also enable that
the same data can be sold multiple times without the holder exhausting
its exclusivity, that is, they introduce a multiplier effect on its
valuation (i.e., Privacy Multiplier). For more details on these concepts
and derivation of formulae, refer to \cite{secrecyAndPrivacyMultiplier}.

\section{Related work}

A number of related works using cryptographic techniques are as follows:
\begin{itemize}
\item Enigma\cite{1506.03471,enigmaThesis} is coded in Python, a language
lacking verification libraries/toolkits and formal semantics so it
can't be used for proof-carrying code; its cryptographic protocols
are not constant-round, thus settings with some latency will be it
excruciatingly slow (i.e., Internet); finally, the only supported
distributed ledger is Bitcoin. Later developments show that Enigma
is pivoting to a decentralized data marketplace using deterministic
and order-preserving encryption\cite{EnigmaDataMarketplace}.
\item WYS{*}\cite{fstarmpc} offers an elegant solution for verifiable and
and secure multi-party computation based on the dependently typed
feature of the F{*} language. Unfortunately, the use of said research
language also compromises its real-world adoption; additionally, it's
not integrated on any blockchain and does not offer proof-carrying
code. Another similar work\cite{EPRINT:ABBDDG14} uses EasyCrypt to
verify Yao's garbled circuits.
\item Hawk\cite{EPRINT:KMSWP15} focuses on protecting the privacy of transactions
using zk-SNARKs: later works like {[}Solidus\cite{cryptoeprint:2017:317},
Confidential Transactions\cite{confidentialTransactions}, Bolt\cite{EPRINT:GreMie16}{]}
provide secure transactions at the protocol level with more efficient
techniques. It only mentions secure multi-party computation as a tentative
way to replace the trusted auction manager, rejecting it as impractical.
\item Chainspace\cite{chainspace} is a sharded smart contract platform
that includes some examples of private smart contracts based on the
PETlib library using homomorphic encryption, and zero-knowledge proofs
only after the execution of the smart contract.
\item ZeroCash\cite{EPRINT:BCGGMT14} and Monero\cite{EPRINT:Noether15}
provide security for transactions, not smart contracts. Note that
secure transactions on blockchains are just a special restricted case
of smart contracts enabled with cryptography, but not the other way
around.
\item Another concurrent work\cite{cryptoeprint:2018:404} provided formal
verification over a MPC framework (i.e., Dafny over the SecreC language
of Sharemind), but only to formally prevent and justify the leakage
of secrets in secure multi-party computation.
\item Oyente\cite{EPRINT:LCOSH16}, a symbolic execution tool to formally
verify Ethereum smart contracts (EVM).
\item Accountable algorithms\cite{accountableAlgorithms} propose a commit-and-prove
protocol with zk-SNARKs to provide accountability proofs of compliance
to legal standards without revealing key attributes of computerized
decisions after said decisions have been taken, but not before their
execution as is done in the present paper.
\item Previous works on combining MPC with Bitcoin\cite{cryptoeprint:2013:784,EPRINT:ADMM13b}
use it as support to obtain fairness in MPC, and not better smart
contracts.
\item Previous projects\cite{EPRINT:MEKHL12,EPRINT:ABBKSS10} designed high-level
languages for Zero-Knowledge Proofs of Knowledge but not for Zero-Knowledge
Proofs of Proofs, and their languages were restricted and not general
purpose. 
\end{itemize}

\section{Conclusions, subsequent and future work}

The present paper has tackled and successfully solved the problem
of improving the privacy, correctness and verifiability of smart contracts,
resolving the DAO and Gyges attacks. Examples have been shown to demonstrate
its practical viability.

\subsection{Subsequent work}

The ability to save the state of garbled circuits and restore them
at later times is a major improvement, bringing them closer to secret
sharing techniques:
\begin{itemize}
\item partial garbled circuits\cite{reuseloseit}: for each wire value,
the generator sends two values to the evaluator, transforming the
wire labels of the evaluator to another garbled circuit; depending
on its point and permute bit, the value from a previous garbled circuit
computation is mapped to a valid wire label in the next computation.
\item reactive garbled circuits\cite{EPRINT:NieRan15}: a generalization
of garbled circuits which allows for partial evaluation and dynamic
input selection based on partial outputs.
\item reusable garbled circuits\cite{EPRINT:GKPVZ12,cryptoeprint:2016:654}
allow for token-based obfuscation where the code producer provides
tokens to code consumers representing rights to execute garbled smart
contracts: later constructions\cite{Wang2017} improve their concrete
efficiency.
\end{itemize}
Raziel is ongoing development and subject to improvements.

\subsection{Future work}

Some encrypted smart contracts will be perpetual (e.g., consols):
if they ever store any kind of encrypted secure computation (e.g.,
secret shares, garbled circuits, homomorphic encryptions, IO) eventually
there will be the need to update their encrypted contents to upgrade
their security level, a concept that has already been considered\cite{EPRINT:LCLPY13,EPRINT:AnaCohJai16,cryptoeprint:2017:137,cryptoeprint:2018:118}.

MPC and SGX are not mutually exclusive and they will be used jointly
to obtain better performance\cite{cryptoeprint:2016:1057,Gupta2016,cryptoeprint:2016:1027}.
On the other hand, it's difficult to derive realistic threat models
and abstractions\cite{cryptoeprint:2016:014,cryptoeprint:2016:1027,cryptoeprint:2017:565}
that withstand the latest attacks against SGX \cite{1702.07521,1702.08719,1703.06986,asyncshock,Xu:2015:CAD:2867539.2867677,tsgx,1611.06952,rollbackSGX,cryptoeprint:2017:736,203696,intelSGXBug,SGXnoPageFaults,jang:sgx-bomb,DBLP:journals/corr/XiaoLCZ17,leakyCauldron,SGXStep,spectreSGX,1802.09085,cachequote,vanbulck2018foreshadow,guarddilemma,nemesis,2018arXiv181105441C,splitSpectre,SMoTherSpectre,ZombieLoad2019,ridl}.

\section*{Acknowledgments}

I would like to thank David Evans and Jonathan Katz for helpful comments
on the paper.

{\footnotesize{}\bibliographystyle{alpha}
\bibliography{crypto,abbrev2,bib}
}{\footnotesize \par}

\appendix

\section{Outsourcing Secure Computations for Cloud-based Blockchains\label{sec:Appendix-outsource}}
\begin{thm}
(Outsourcing for Cloud-Based Blockchains). Assuming secure channels
between the parties $E_{1},E_{2},...,E_{N}$ and the nodes $N_{i}$
of the blockchain $B$, Protocol B.2 securely realises Functionality
B.1 against static corruptions in the semi-honest security model with
the garbling scheme satisfying privacy, obliviousness and correctness.\end{thm}
\begin{proof}
Two cases must be analyzed: corruption at the node $N_{E}$ executing
the secure computation and corruption of the parties.\\

\textbf{Corruption at node $N_{E}$. }The simulator $Simul_{N_{E}}\left(1^{\lambda},result_{N_{E}}\right)$,
where $result_{N_{E}}$ stores the $KeyExchange$, $SendPrivateParameters$
and $SECCOMP$ run by each party $E_{1},E_{2},...,E_{N}$:
\begin{itemize}
\item $KeyExchange\left\langle E_{i}\left(1^{\lambda}\right),N_{E}\right\rangle $:
sample a public key $pk_{i}$ for party $E_{i}$.
\item $SendPrivateParameters\left\langle E_{i},N_{E}\right\rangle $: sample
a vector of random values $c_{i}=\left(\left(X_{i1}^{x_{i}\left[1\right]},...,X_{il}^{x_{i}\left[l\right]}\right),n_{i}\right)$.
\item $SECCOMP\left\langle E_{k},N_{E}\right\rangle \left(SC\right)$: a
simulated garbled circuit and garbled values are computed. Then, the
simulator computes one entry of the pivot tables by encrypting random
values and the other entry by encrypting each garbled value with the
random values generated in the $SendPrivateParameters$ phase:

\begin{itemize}
\item A simulator of garbled circuits generates a simulated garbled circuit
$GC_{SC}$ and garbled values $w_{jl}$ for each party $E_{j}$ and
index $l$ of the length of $\overrightarrow{x_{i}}$.
\item Compute $Enc_{s_{jl}}\left(0\right)$ with random key $s_{jl}$ and
$Enc_{X_{jl}^{x_{j}\left[l\right]}}\left(w_{jl}\right)$ and save
them into pivot table $P_{q}\left[j,l\right]$ in random order.
\item Return the pivot tables $P_{q}$ for each $E_{j}$, the garbled circuit
$GC_{SC}$ and values $w_{jl}$.
\end{itemize}
\end{itemize}
We prove using hybrid arguments that the view generated by $Simul_{N_{E}}$
is indistinguishable from the view obtained by $N_{E}$ in the real
world:\\

$H_{0}.$ The real world view is computed using the real inputs of
the parties and according to the original protocol.\\

$H_{1}^{i}.$ What is different between $H_{1}^{i}$ and $H_{1}^{i+1}$
is that for each $E_{i}$ in $H_{1}^{i}$ the inputs are encoded using
random values: that is, $X_{il}^{x_{i}\left[l\right]}=rand_{il}$.
It would be possible to obtain a distinguisher for the pseudorandomness
of $PRF$ assuming a distinguisher between $H_{1}^{i}$ and $H_{1}^{i+1}$.
The values $X_{il}^{x_{i}\left[l\right]}$ will be computed with an
oracle by the reduction: the view will be distributed as in game $H_{1}^{i}$
if the oracle is using a random function; otherwise, it will be distributed
as in game $H_{1}^{i+1}$ if the oracle is a pseudo-random function.
Thus, a contradiction will be reached since any adversary distinguishing
$H_{1}^{i}$ from $H_{1}^{i+1}$ with non-negligible probability will
be able to break the pseudo-randomness with the same probability.
\\
The sequence of hybrid games starts with $H_{1}^{1}$ where random
values encode inputs of party $E_{i}$, to $H_{1}^{n}$ where random
values from parties $E_{1}...E_{N}$ encode inputs of the parties.\\
Finally, note that the hybrid $H_{0}=H_{1}^{0}$ corresponds to the
case where all inputs are pseudo-random values, while the last game
$H_{1}^{n}$ corresponds to the case in which random values encode
all inputs.\\

$H_{2}^{i,j,l}.$ In this hybrid, the call $SECCOMP$ by party $E_{i}$
computes the pivot table $P_{q}\left[j,l\right]=Enc_{s_{jl}^{x_{j}[l]}}\left(0\right),Enc_{s_{jl}^{x_{j}[l]}}\left(w_{jl}^{x_{j}[l]}\right)$,
that is, only one garbled value per wire is encrypted: therefore,
a contradiction will be reached since an adversary distinguishing
between $H_{2}^{i,j,l}$ and $H_{2}^{i,j,l+1}$ with non-negligible
probability can be reduced to a distinguisher for the indistinguishability
under chosen-plaintext attacks (IND-CPA) of the encryption scheme.\\
The sequence of hybrid games starts with $H_{2}^{1,1,l}$ where the
pivot table of party $E_{i}$ is encoded with only one garbled value
per wire, to $H_{2}^{n,n,l}$ where all pivot tables of parties $E_{1}...E_{N}$
are encoded with only on garbled value per wire.\\

$H_{3}^{j}.$ The simulator $Simul_{N_{E}}$ computes the pivot tables
and a simulator of garbled circuits is used instead of the real garbling
scheme. It would be possible to obtain a distinguisher for garbled
circuits assuming a distinguisher between $H_{2}^{n,n,l}$ and $H_{3}^{j}$:
the inputs to the garbled circuits circuit will be computed by an
oracle by the reduction as in $H_{2}^{n,n,l}$, to get garbled values
and garbled circuits.\\
Finally, we reach the ideal world with hybrid $H_{3}^{n}$, finishing
the proof that started with the real world game $H_{0}$.\\

\textbf{Corruption of the parties}. Assuming a single party $E_{i}$
is corrupted and secure channels between the parties and $N_{E}$,
the simulator $Simul_{E_{i}}$ is defined as:
\begin{itemize}
\item $KeyExchange\left\langle E_{i}\left(1^{\lambda}\right),N_{E}\right\rangle $:
a public key $pk_{i}$ is chosen by party $U_{i}$.
\item $SendPrivateParameters\left\langle E_{i},N_{E}\right\rangle $: using
the honest version for $\overrightarrow{x_{i}}$, honestly compute
values $X_{il}^{\overrightarrow{x_{i}}\left[l\right]}=PRF_{k_{S}}\left(\overrightarrow{x_{i}},l,n_{i}\right)$.
\item $SECCOMP\left\langle E_{k},N_{E}\right\rangle \left(SC\right)$: honestly
compute garbled circuit and pivot tables. For the result, select the
garbled values according to Functionality B.1 called with value $x_{i}$.
\end{itemize}
Note that the method to compute the result is the only difference
in computing the view of $E_{i}$: while the evaluation of the garbled
circuit is required in the real world, in the ideal world the correct
resulting garbled values are chosen by the simulator $Simul_{E_{i}}$
using the result of Functionality B.1. Therefore, the indistinguishability
of the view of $E_{i}$ is due to the secure channels: the case for
more corrupted parties trivially follows from this argument.\end{proof}

\end{document}